\providecommand{\tabularnewline}{\\}
\providecommand{\tabularnewline}{\\}
\providecommand{\U}[1]{\protect\rule{.1in}{.1in}}
\newtheorem{proposition}{Proposition}\newtheorem{remark}{Remark}
\begin{document}
	\title{New Accurate Approximation for {Average Error Probability} Under
		$\kappa-\mu$ Shadowed Fading Channel}
	
	\author{{Yassine Mouchtak, Faissal El\ Bouanani\\
			ENSIAS, Mohammed V University in Rabat\\
			e-mails: \{yassine.mouchtak,f.elbouanani\}@um5s.net.ma\\
			Corresponding author: Faissal El Bouanani.}}
	
	\maketitle
	\author{{Yassine Mouchtak, Faissal El\ Bouanani, \textit{Senior Member,
				IEEE} \thanks{ Y. Mouchtak and F. El\ Bouanani are with Mohammed V University,
				10000 Rabat, Morocco (e-mails: {faissal.elbouanani@um5.ac.ma, yassine.mouchtak}@um5s.net.ma).}}}
	
	\begin{abstract}
		This paper proposes new accurate approximations for average
		error probability {(AEP)} of a communication system employing either
		$M$-phase-shift keying (PSK) or differential quaternary PSK with Gray coding(GC-DQPSK)
		modulation schemes over $\kappa-\mu$ shadowed fading
		channel. Firstly, new accurate approximations of error probability (EP)
		of both modulation schemes are derived over additive white Gaussian
		noise (AWGN) channel. Leveraging the trapezoidal integral method,
		a tight approximate expression of {symbol error probability} for $M$-PSK modulation is presented, while
		new upper and lower bounds for Marcum $Q$-function of the first order
		(MQF), and subsequently those for {bit error probability (BER)} under DQPSK scheme, are proposed.
		Next, these bounds are linearly combined to propose a highly refined and accurate
		{BER's} approximation. The key idea manifested in the decrease property
		of modified Bessel function $I_{v}$, strongly related to MQF, with
		its argument $v$. Finally, theses approximations are used to tackle
		{AEP's} approximation under $\kappa-\mu$ shadowed fading. Numerical
		results show the accuracy of the presented approximations compared
		to the exact ones.
	\end{abstract}
	
	\begin{IEEEkeywords}
		Bit error rate, bounds refinement, DQPSK modulation, $\kappa-\mu$ shadowed fading, $M$-PSK modulation, Marcum $Q$-function, symbol error rate,
		upper bound.
	\end{IEEEkeywords}

	\section{Introduction}
	
	Wireless technologies are becoming part of our daily lives and their
	utilization increase rapidly due to many advantages such as cost-effectiveness,
	global coverage and flexibility. Nevertheless, these technologies
	are infected by many phenomena including shadowing which is relatively
	slow and gives rise to long-term signal variations and multipath fading
	which is due to constructive and destructive interferences as a result
	of delayed, diffracted, reflected, and scattered signal components
	\cite{gqfunc}. A great number of communication channels' models have
	been proposed in the literature to describe either the fading or the
	joint shadowing/fading phenomena \cite{m1}-\cite{m4}. Recently,
	the $\kappa-\mu$ shadowed fading proposed in \cite{kappa}, has attracted
	a lot of interest due to its versatility and wide applicability in
	practical scenarios. For instance, it was used for characterizing signal
	reception in device-to-device communications, body-to-body communications,
	underwater acoustic, fifth-generation (5G) communications, and satellite
	communication systems \cite{Ap1}-\cite{ap5}. In addition, it was
	shown that numerous statistical models can be derived from the $\kappa-\mu$
	shadowed one by setting the parameters to some specific real positive
	values \cite{kapamu}. {Particularly, when the parameters $\mu$ and $m$ are integer, such a model is equivalent to what's referred to as composite fading, namely, mixture Gamma distribution \cite{kapa2}.}
	
	The {average error probability (AEP)} is a fundamental performance evaluation
	tool in digital communications, quantifying the reliability of an
	instantaneous received signal. Furthermore, dealing with the average
	{EP (AEP)} is quite practical in most applications as it states the
	average performance irrespective of time. Nonetheless, evaluating
	{AEP} in closed form remains a big challenge for numerous communication
	systems because of the complexity of either the end-to-end fading
	model or the employed modulation technique.  Essentially, depending
	on the employed modulation scheme, {EP} is provided in either complicated
	integral form \cite{gqfunc} or first-order Marcum $Q$-function (MQF)
	and the zeroth-order modified Bessel function (MBF) of the first kind
	\cite{proakis} for various $M$-ary and differential quadrature phase-shift
	keying (DQPSK) modulation schemes, respectively. That integral form
	can be reexpressed also in terms Gaussian $Q$-function (GQF), which
	is not known in closed form. By its turn, the MQF integral-form involves
	the MBF with exponential term \cite{gqfunc}, that can be rewritten
	appropriately as an upper incomplete upper Fox's H-function (UIFH),
	or equivalently, an infinite summation of the product of upper incomplete
	Gamma functions \cite{table integral}. Thus, obtaining {AEP} requires
	the averaging of a UIFH over a generalized fading distribution, which
	is not evident particularly for fading model with probability density
	function (PDF) involving the product of exponential and Fox's H- functions
	(e.g. $\kappa-\mu$ shadowed model). Obviously, deriving accurate
	bounds or approximations for the {AEP} is strongly depending on the
	{EP's} ones. To this end, several EP's bounds and approximations for
	the EP are proposed in the literature, for instance, in \cite{chie}-\cite{luca}, numerous bounds for the symbol error probability (SEP) in the case of $M$-ary PSK modulation are derived in terms of GQF and its powers. Such a function is itself mathematically intractable when involved in complicated integrals resulting from generalized fading distributions. To remedy this problem, several various works deal with simple, and accurate approximations or bounds for GQF when applied to inspect the performance of a communication system experiencing to a particular bivariate-Fox's H-fading model  \cite{IeeeACCESS}, \cite{yassine}. In contrast, evaluating the performance of GC-DQPSK modulation requires simple bounds or approximate expressions for MQF due to its complicated closed-form and intractability when involved in the computation of AEP \cite{12}-\cite{15}. In \cite{ferrari} and \cite{sun}, bounds
	for {EP} are investigated, while in \cite{barzic}, new lower and upper
	bounds for EP were proposed, based on which a novel approximation
	was derived. Despite the good accuracy of the latter's bounds and/or approximation for both MPSK and GC-DQPSK,
	they remain useless for {AEP} computation because of their forms' complications.
	
	\subsection{Motivation}
	
	The performance of wireless communication systems, with perfect channel
	state information (CSI) knowledge at the receiver, is widely examined
	by the scientific community. However, imperfect estimation of channel
	coefficients is dealt with various practical scenarios, leading to a
	significant degradation of the system performance. To overcome this
	limitation, differential modulation (DM) can be considered as an alternative
	solution particularly for low-power wireless systems, such as wireless
	sensor networks and relay networks \cite{abouie}. The main advantage
	of this scheme is its simplicity of detection due to the unnecessary
	channel coefficients estimation and tracking, leading to significant
	reduction in the receiver computational complexity \cite{24}, \cite{25}.
	However, this comes at a cost of higher error rate or lower spectral
	efficiency. As a result, selecting the most suitable modulation scheme
	depends on the considered application and both coherent and non-coherent
	detections. To this end, this paper is devoted to analyzing the performance
	of two modulation schemes, namely $M$-PSK and DQPSK over $\kappa-\mu$
	shadowed fading channel.
	
	\subsection{Contribution}
	
	Capitalizing on the above, we aim at this work to propose accurate
	approximations for {AEP} under $\kappa-\mu$ shadowed fading and aforementioned
	modulation schemes. Specifically, utilizing the trapezoidal integral
	method, the {EP} integral form for various $M$-ary modulation schemes
	is tightly approximated particularly for $M$-PSK scheme, while for
	DQPSK technique, we start by deriving simple lower and upper bounds
	for {EP} by bounding MQF, to be used jointly in finding {AEP} for generalized
	fading models.
	
	Pointedly, our main key contributions can be summarized as follows
	\begin{itemize}
		\item We propose a new exponential type approximation for the {EP's} first
		form applied to$M$-PSK modulation by using the trapezoidal technique
		integral. To the best of the authors' knowledge, such accurate {EP's}
		approximation outperforms those presented in the literature,
		\item We derive new upper and lower bounds of {EP} in the case of DQPSK modulation
		based on which an accurate approximation of SEP is proposed,
		\item We provide, relying on the two proposed {EP's} approximations, a tight
		approximate expression for {AEP} over $\kappa-\mu$ shadowed fading
		channel,
		\item We provide the asymptotic analysis for both forms of {AEP} and we demonstrate
		that the diversity order over $\kappa-\mu$ shadowed fading channel
		remains constant.
	\end{itemize}
	Motivated by this introduction, the rest of this paper can be structured
	as follows. Methods and analysis used are described briefly in Section II. In section III, a new approximation for the first {EP} form
	(i.e., $M$-ary modulation) is presented for $M$-PSK while, new lower
	and upper bound of {EP} in the case of DQPSK are derived, based on
	which an accurate approximation for the {EP} is deduced. In Section
	IV, the expression of {AEP} under $\kappa-\mu$ shadowed fading for
	both modulation schemes is evaluated. In section V, the respective
	results are illustrated and verified by comparison with the exact
	ones using simulation computing. Section VI summarizes the main conclusions.
	\section{{Methods}}
	The present work deal with the performance analysis of a wireless communication system subject to $\kappa-\mu$ shadowed fading. To this end, various methods are applied to derive new accurate approximate expressions for the error probability. Pointedly, the Trapezoedial integral technique is utilized to propose a simple approximation for the SEP in the case of $M$-PSK modulation, while bounding technique is used to derive upper and lower bounds for the bit error probability (BEP) in the case of GC-DQPSK modulation. Furthermore, by using Matlab curve fitting application, a refinement is applied on these two new bounds to obtain a novel accurate approximation for the BEP of this latter modulation scheme. Next, the approximations of the EP are used to derive tight approximation for the AEP under the considered fading model. Lastly, the accuracy of the derived AEP's approximation is validated using Monte carlo simulation.

	\section{Bounds on the SEP}
	
	In this section, we propose new approximate expressions for the two
	potential different forms of {EP}, namely (i) complicated integral
	form, and (ii) MQF form, applied to $M$-PSK and DQPSK modulations
	with Gray coding, respectively.
	
	\subsection{EP with integral form}
	
	\begin{proposition} {The SEP} for $M$-PSK modulation can be tightly
		approximated by
		\begin{equation}
		\widetilde{\mathcal{H}}_{1}\left(\gamma\right)\simeq\sum_{l=1}^{7}\mathcal{A}_{l}\exp\left(-\mathcal{B}_{l}\gamma\right),\label{APProx-MPSK}
		\end{equation}
		while $\mathcal{A}_{l}$ and $\mathcal{B}_{l}$ \ are given in Table
		\ref{coeffMPSK}.
		
		\begin{table*}[tbh]
			\caption{The coefficients $\mathcal{A}_{l}$ and $\mathcal{B}_{l}$.}
			\label{coeffMPSK} \centering{}%
			\begin{tabular}{c|c|c|c|c|c|c|c}
				\hline
				$l$  & $1$  & $2$  & $3$  & $4$  & $5$  & $6$  & $7$\tabularnewline
				\hline
				$\mathcal{A}_{l}$  & $\frac{7M-8}{48M}$  & $\frac{1}{8}$  & $\frac{1}{8}$  & $\frac{1}{8}$  & $\frac{M-2}{12M}$  & $\frac{M-2}{6M}$  & $\frac{M-2}{6M}$\tabularnewline
				\hline
				$\mathcal{B}_{l}$  & $\varrho$  & $2\varrho$  & $\frac{20\varrho}{3}$  & $\frac{20\varrho}{17}$  & $\frac{\varrho}{\cos^{2}(\frac{M-2}{2M}\pi)}$  & $\frac{\varrho}{\cos^{2}(\frac{M-2}{6M}\pi)}$  & $\frac{\varrho}{\cos^{2}(\frac{M-2}{3M}\pi)}$\tabularnewline
				\hline
			\end{tabular}
		\end{table*}
		
	\end{proposition}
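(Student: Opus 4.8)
The plan is to discretize, via the composite trapezoidal rule, the standard single-integral representation of the $M$-PSK symbol error probability, $\mathcal{H}_{1}(\gamma)=\frac{1}{\pi}\int_{0}^{(M-1)\pi/M}\exp\!\big(-\varrho\gamma/\sin^{2}\theta\big)\,d\theta$ with $\varrho=\sin^{2}(\pi/M)$. The first step is to fold the integrand using $\sin(\pi-\theta)=\sin\theta$: since $(M-1)\pi/M=\pi-\pi/M$, splitting the range at $\theta=\pi/2$ and substituting $\theta\mapsto\pi-\theta$ on the upper part maps it onto $[\pi/M,\pi/2]$, giving $\mathcal{H}_{1}(\gamma)=\frac1\pi\int_{0}^{\pi/2}\exp(-\varrho\gamma/\sin^{2}\theta)\,d\theta+\frac1\pi\int_{\pi/M}^{\pi/2}\exp(-\varrho\gamma/\sin^{2}\theta)\,d\theta$. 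The endpoints $\theta=\pi/2$ (shared) and $\theta=\pi/M$ are precisely the nodes that will produce $\mathcal{B}_{1}$ and $\mathcal{B}_{5}$, which already hints at the partitions to use.

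Next I would apply the composite trapezoidal rule with four equal subintervals to the first integral --- nodes $0,\pi/8,\pi/4,3\pi/8,\pi/2$, step $\pi/8$, weights $\tfrac1{16},\tfrac18,\tfrac18,\tfrac18,\tfrac1{16}$ after the factor $1/\pi$ --- and with three equal subintervals to the second --- nodes $\pi/M,\tfrac{(M+4)\pi}{6M},\tfrac{(M+1)\pi}{3M},\pi/2$, step $\tfrac{(M-2)\pi}{6M}$, weights $\tfrac{M-2}{12M},\tfrac{M-2}{6M},\tfrac{M-2}{6M},\tfrac{M-2}{12M}$. Two elementary reductions then handle the bookkeeping: (i) $\exp(-\varrho\gamma/\sin^{2}\theta)\to0$ as $\theta\to0^{+}$, with all derivatives vanishing there, so the $\theta=0$ node drops out (and, as a bonus, the trapezoidal error over $[0,\pi/8]$ is negligible); (ii) each surviving node value is rewritten in closed form using $\sin^{2}(\pi/4)=\tfrac12$, $\sin^{2}(\pi/M)=\cos^{2}\!\big(\tfrac{(M-2)\pi}{2M}\big)$, and the complementary-angle identities $\sin\!\big(\tfrac{(M+4)\pi}{6M}\big)=\cos\!\big(\tfrac{(M-2)\pi}{3M}\big)$, $\sin\!\big(\tfrac{(M+1)\pi}{3M}\big)=\cos\!\big(\tfrac{(M-2)\pi}{6M}\big)$, each valid because the two arguments sum to $\pi/2$.

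For the two remaining first-integral nodes $\theta=\pi/8$ and $\theta=3\pi/8$ I would replace the exact surds $\sin^{2}(\pi/8)=\tfrac{2-\sqrt2}{4}$ and $\sin^{2}(3\pi/8)=\tfrac{2+\sqrt2}{4}$ by the nearby rationals $\tfrac3{20}$ and $\tfrac{17}{20}$ (still summing to $1$), which turns those two contributions into $\tfrac18e^{-20\varrho\gamma/3}$ and $\tfrac18e^{-20\varrho\gamma/17}$. Collecting the seven distinct exponents then finishes the proof: the nodes $\pi/4,\pi/8,3\pi/8$ of the first integral and $\pi/M,\tfrac{(M+1)\pi}{3M},\tfrac{(M+4)\pi}{6M}$ of the second yield, with their weights, the six pairs $(\mathcal{A}_{2},\mathcal{B}_{2}),\dots,(\mathcal{A}_{7},\mathcal{B}_{7})$; the node $\theta=\pi/2$ (exponent $\varrho\gamma$) is common to both integrals, so its coefficient is the sum $\tfrac1{16}+\tfrac{M-2}{12M}=\tfrac{7M-8}{48M}=\mathcal{A}_{1}$. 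This reproduces \eqref{APProx-MPSK} with the entries of Table~\ref{coeffMPSK}.

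The one step that is not purely mechanical is establishing that the approximation is \emph{tight}, i.e.\ that $\widetilde{\mathcal{H}}_{1}$ tracks $\mathcal{H}_{1}$ well uniformly in $\gamma$. The trapezoidal rule is only $O(h^{2})$-exact and the rationalization of $\sin^{2}(\pi/8),\sin^{2}(3\pi/8)$ adds a further small bias, so the tightness claim rests on the extreme flatness of the integrand near $\theta=0$ (where the dominant boundary error would otherwise reside) together with a direct numerical comparison against the exact SEP, which is supplied in the results section. The particular counts (four versus three subintervals) and rationals are a design trade-off between closed-form simplicity and accuracy, not something dictated by the derivation.
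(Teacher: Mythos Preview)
Your derivation is correct and lands on exactly the seven $(\mathcal{A}_l,\mathcal{B}_l)$ pairs; the only slip is the normalization $\varrho=\sin^{2}(\pi/M)$, which in the paper is $\varrho=\log_{2}(M)\sin^{2}(\pi/M)$ (SNR per bit), but this does not affect the argument since everything is expressed in terms of $\varrho$.

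The route differs slightly from the paper's. The paper performs the split $\mathcal{H}_{1}(\gamma)=Q(\sqrt{2\varrho\gamma})+\mathcal{I}$ with $\mathcal{I}=\frac{1}{\pi}\int_{0}^{(M-2)\pi/(2M)}\exp(-\varrho\gamma/\cos^{2}t)\,dt$ (the shift $t=\theta-\pi/2$ rather than your fold $\theta\mapsto\pi-\theta$), then applies the trapezoidal rule with $n=3$ only to $\mathcal{I}$ and imports a ready-made four-term exponential approximation of $Q(\cdot)$ from Sadhwani \emph{et al.} for the first part. Your version instead applies the trapezoidal rule with four panels directly to the Craig integral $\frac{1}{\pi}\int_{0}^{\pi/2}\exp(-\varrho\gamma/\sin^{2}\theta)\,d\theta$, drops the $\theta=0$ node, and rationalizes $\sin^{2}(\pi/8),\sin^{2}(3\pi/8)$ to $3/20,17/20$ --- which is precisely how the cited $Q$-approximation is obtained, so you have in effect re-derived it. The upshot is the same seven terms; your presentation is more self-contained (no external citation needed), while the paper's is shorter because it delegates four of the seven terms to a reference.
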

	
	\begin{proof}
		
		The SEP for $M$-PSK modulation is given as \cite[Eq. (8.22)]{gqfunc}
		
		\begin{equation}
		\mathcal{H}_{1}\left(\gamma\right)=\frac{1}{\pi}\int_{0}^{\frac{M-1}{M}\pi}\exp\left(-\frac{\varrho\gamma}{\sin^{2}(\theta)}\right)d\theta,\label{CBERMPSK}
		\end{equation}
		with
		\begin{equation}
		\varrho=\log_{2}(M)\sin^{2}\left(\frac{\pi}{M}\right),
		\end{equation}
		and $\gamma$ denotes the signal-to-noise (SNR) ratio per bit.
		
		Subsequently, $\left(\ref{CBERMPSK}\right)$ can be written as
		
		\begin{equation}
		\mathcal{H}_{1}\left(\gamma\right)=Q\left(\sqrt{2\varrho\gamma}\right)+\underbrace{\frac{1}{\pi}\int_{0}^{\frac{M-2}{2M}\pi}\exp\left(-\frac{\varrho\gamma}{\cos^{2}(t)}\right)dt}_{\mathcal{I}}\label{CMPSK2}
		\end{equation}
		where $Q\left(.\right)$ denotes the Gaussian $Q-$Function \cite[Eq. (4.1)]{gqfunc}.
		
		The integral $\mathcal{I}$ can be approximated using numerical integration
		rules. The trapezoidal rule for definite integration of an arbitrary function
		between $[x_{0},x_{0}+n\phi]$ is given by
		\[
		\int_{x_{0}}^{x_{0}+n\phi}f\left(t\right)dt=\frac{\phi}{2}\left[g_{0}+g_{n}+2\sum_{i=1}^{n-1}g_{i}\right],
		\]
		where $g_{i}=f\left(x_{0}+i\phi\right)$ for $i=0..n$, $n$ refers
		to the number of sub-intervals equally spaced trapeziums, and $\phi$
		defines the spacing. Note that greater $\ n$ the higher the accuracy's
		approximation and the computational complexity as well.
		
		By setting $n=3$ and $f\left(t\right)=\exp\left[-\varrho\gamma/\cos^{2}(t)\right]$,
		$\mathcal{I}$ can be approximated as
		\begin{eqnarray}
		\mathcal{I} & \simeq & \frac{M-2}{12M}\left[\begin{array}{c}
		\exp\left(-\varrho\gamma\right)\\
		+\exp\left(-\frac{\varrho\gamma}{\cos^{2}(\frac{M-2}{2M}\pi)}\right)
		\end{array}\right]\nonumber \\
		&  & +\frac{M-2}{6M}\left[\begin{array}{c}
		\exp\left(-\frac{\varrho\gamma}{\cos^{2}(\frac{M-2}{6M}\pi)}\right)\\
		+\exp\left(-\frac{\varrho\gamma}{\cos^{2}(\frac{M-2}{3M}\pi)}\right)
		\end{array}\right].\label{inegral1}
		\end{eqnarray}
		
		By plugging \cite[Eq. (8b)]{sadhwani} and $\eqref{inegral1}$ in $\eqref{CMPSK2},$
		\eqref{besseli0.5} is attained.
		
	\end{proof}
	
	Table \ref{compaMPSK} confirms the accuracy of the proposed approximation.
	Interestingly, one can ascertain that this tightness can be further
	improved by increasing $n$, i.e., by increasing the number of terms
	in the approximation.
	
	\begin{table*}
		\caption{Comparison between the exact and approximate {SEP} for $M$-PSK modulation.}
		\label{compaMPSK}%
		\begin{tabular}{c|c|c|c|c|c|c|c}
			\multicolumn{8}{c}{$M=4$}\tabularnewline
			\hline
			\hline
			\backslashbox{SEP}{$\gamma$}  & $1$  & $2$  & $3$  & $4$  & $5$  & $6$  & $7$ \tabularnewline
			\hline
			$\mathcal{H}_{1}\left(\gamma\right),$ Eq. $\left(\ref{CBERMPSK}\right)$  & $1.508\times10^{-1}$  & $4.494\times10^{-2}$  & $1.425\times10^{-2}$  & $4.672\times10^{-3}$  & $1.565\times10^{-3}$  & $5.319\times10^{-4}$  & $1.828\times10^{-4}$ \tabularnewline
			\hline
			$\widetilde{\mathcal{H}}_{1}\left(\gamma\right),$ Eq. $\left(\ref{APProx-MPSK}\right)$  & $1.501\times10^{-1}$  & $4.460\times10^{-2}$  & $1.414\times10^{-2}$  & $4.642\times10^{-3}$  & $1.556\times10^{-3}$  & $5.289\times10^{-4}$  & $1.816\times10^{-4}$ \tabularnewline
			\hline
			\multicolumn{8}{c}{$M=8$}\tabularnewline
			\hline
			\hline
			\backslashbox{SEP}{$\gamma$}  & \multicolumn{1}{c|}{$2$} & $4$  & $6$  & $8$  & $10$  & $14$  & $16$ \tabularnewline
			\hline
			$\mathcal{H}_{1}\left(\gamma\right),$ Eq. $\left(\ref{CBERMPSK}\right)$  & \multicolumn{1}{c|}{$1.849\times10^{-1}$} & $6.083\times10^{-2}$  & $2.167\times10^{-2}$  & $8.018\times10^{-3}$  & $3.034\times10^{-3}$  & $4.526\times10^{-4}$  & $1.772\times10^{-4}$ \tabularnewline
			\hline
			$\widetilde{\mathcal{H}}_{1}\left(\gamma\right),$ Eq. $\left(\ref{APProx-MPSK}\right)$  & \multicolumn{1}{c|}{$1.847\times10^{-1}$} & $6.077\times10^{-2}$  & $2.156\times10^{-2}$  & $7.976\times10^{-3}$  & $3.022\times10^{-3}$  & $4.504\times10^{-4}$  & $1.760\times10^{-4}$ \tabularnewline
			\hline
			\multicolumn{8}{c}{$M=16$}\tabularnewline
			\hline
			\hline
			\backslashbox{SEP}{$\gamma$}  & \multicolumn{1}{c|}{$5$} & $10$  & $20$  & $30$  & $35$  & $40$  & $45$ \tabularnewline
			\hline
			$\mathcal{H}_{1}\left(\gamma\right),$ Eq. $\left(\ref{CBERMPSK}\right)$  & \multicolumn{1}{c|}{$2.173\times10^{-1}$} & $8.100\times10^{-2}$  & $1.360\times10^{-2}$  & $2.508\times10^{-3}$  & $1.096\times10^{-3}$  & $4.832\times10^{-4}$  & $2.143\times10^{-4}$ \tabularnewline
			\hline
			$\widetilde{\mathcal{H}}_{1}\left(\gamma\right),$ Eq. $\left(\ref{APProx-MPSK}\right)$  & \multicolumn{1}{c|}{$2.177\times10^{-1}$} & $8.072\times10^{-2}$  & $1.358\times10^{-2}$  & $2.500\times10^{-3}$  & $1.091\times10^{-3}$  & $4.794\times10^{-4}$  & $2.122\times10^{-4}$ \tabularnewline
			\hline
		\end{tabular}
	\end{table*}

	\subsection{EP with MQF form}
	
	{The bit error probability (BEP)} for DQPSK modulation with Gray coding is given by \cite{proakis}
	\begin{equation}
	\mathcal{H}_{2}\left(\gamma\right)=Q_{1}\left(a\sqrt{\gamma},b\sqrt{\gamma}\right)-\frac{1}{2}I_{0}\left(\sqrt{2}\gamma\right)\exp\left(-2\gamma\right),\label{pep}
	\end{equation}
	with $\gamma$ denotes the SNR per bit, $a=\sqrt{2\left(1-\sqrt{0.5}\right)}$,
	$b=\sqrt{2\left(1+\sqrt{0.5}\right)}$, $I_{v}\left(.\right)$ is
	the $v$-th order modified Bessel function of the first kind \cite[ Eq. (8.431)]{table integral},
	and $Q_{1}\left(.,.\right)$ represents the first-order MQF defined
	as \cite[Eq. (4.34)]{gqfunc}
	
	\begin{equation}
	Q_{1}\left(\alpha,\beta\right)=\int_{\beta}^{\infty}t\exp\left(-\frac{t^{2}+\alpha^{2}}{2}\right)I_{0}\left(\alpha t\right)dt.\label{mqf}
	\end{equation}

	\subsubsection{New lower bound for {BEP}}
	
	\begin{proposition} The {BEP} for DQPSK modulation with Gray coding
		can be lower bounded as
		\begin{equation}
		\mathcal{H}_{2}\left(\gamma\right)\geq\mathcal{L}\left(\gamma\right),
		\end{equation}
		with
		\begin{equation}
		\mathcal{L}\left(\gamma\right)\triangleq\delta\mathcal{K}(a,b,\gamma)-\frac{1}{2}I_{0}\left(\sqrt{2}\gamma\right)\exp\left(-2\gamma\right),\label{lower}
		\end{equation}
		\begin{equation}
		\mathcal{K}(a,b,\gamma)=Q\left(\left(b-a\right)\sqrt{\gamma}\right)-Q\left(\left(b+a\right)\sqrt{\gamma}\right),\label{KK}
		\end{equation}
		and $Q(.)$ denotes the Gaussian $Q$-function \cite[Eq. (4.1)]{gqfunc},
		and $\delta=\sqrt{\frac{b}{a}}.$ \end{proposition}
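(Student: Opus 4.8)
The plan is to first observe that the term $-\frac{1}{2}I_{0}(\sqrt{2}\gamma)\exp(-2\gamma)$ is common to $\mathcal{H}_{2}(\gamma)$ in \eqref{pep} and to $\mathcal{L}(\gamma)$ in \eqref{lower}, so the desired inequality $\mathcal{H}_{2}(\gamma)\geq\mathcal{L}(\gamma)$ collapses to the single Marcum-$Q$ statement
\begin{equation}
Q_{1}\left(a\sqrt{\gamma},b\sqrt{\gamma}\right)\geq\delta\,\mathcal{K}(a,b,\gamma),\qquad\delta=\sqrt{b/a}.\label{redux}
\end{equation}
I would prove \eqref{redux} by working directly from the integral form \eqref{mqf}, i.e. $Q_{1}(\alpha,\beta)=\int_{\beta}^{\infty}t\exp(-(t^{2}+\alpha^{2})/2)I_{0}(\alpha t)\,dt$ with $\alpha=a\sqrt{\gamma}$ and $\beta=b\sqrt{\gamma}$ (note that $a<b$, hence $0<\alpha<\beta$), and lower bounding the Bessel kernel $I_{0}(\alpha t)$ inside the integral.

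The key ingredient is an elementary lower bound of the form $I_{0}(x)\geq(e^{x}-e^{-x})/\sqrt{2\pi x}$, which holds for all $x>0$ (it can be checked directly, using $\sqrt{2\pi x}\,I_{0}(x)\to e^{x}$ as $x\to\infty$ and the behaviour of $I_{0}$ near the origin); this is precisely where the growth behaviour of the modified Bessel function enters, and it is what generates the \emph{two} $Q$-terms appearing in $\mathcal{K}$. Substituting $x=\alpha t$ and completing the square in the two exponents $-(t^{2}+\alpha^{2})/2\pm\alpha t$ converts the lower bound into
\begin{equation}
Q_{1}(\alpha,\beta)\geq\frac{1}{\sqrt{2\pi\alpha}}\int_{\beta}^{\infty}\sqrt{t}\left[\exp\!\left(-\frac{(t-\alpha)^{2}}{2}\right)-\exp\!\left(-\frac{(t+\alpha)^{2}}{2}\right)\right]dt.\label{halfway}
\end{equation}
On $[\beta,\infty)$ the bracket is non-negative since $0<\alpha<t$, and $\sqrt{t}\geq\sqrt{\beta}$ there, so I would pull $\sqrt{t}$ out of the integral and replace it by $\sqrt{\beta}$; the two remaining Gaussian integrals, after the shifts $u=t-\alpha$ and $v=t+\alpha$, equal $\sqrt{2\pi}\,Q(\beta-\alpha)$ and $\sqrt{2\pi}\,Q(\beta+\alpha)$ respectively.

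Collecting these pieces yields $Q_{1}(\alpha,\beta)\geq\sqrt{\beta/\alpha}\,[Q(\beta-\alpha)-Q(\beta+\alpha)]$; restoring $\alpha=a\sqrt{\gamma}$, $\beta=b\sqrt{\gamma}$ gives $\sqrt{\beta/\alpha}=\sqrt{b/a}=\delta$, $\beta-\alpha=(b-a)\sqrt{\gamma}$, and $\beta+\alpha=(b+a)\sqrt{\gamma}$, which is exactly \eqref{redux} in view of the definition \eqref{KK} of $\mathcal{K}$; re-adding the common Bessel term recovers $\mathcal{H}_{2}(\gamma)\geq\mathcal{L}(\gamma)$. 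The one genuinely delicate step is the Bessel inequality invoked in \eqref{halfway}: the more familiar bound $I_{0}(x)\geq e^{x}/\sqrt{2\pi x}$ fails for small $x$, so one must keep the $e^{-x}$ correction term (which, conveniently, is what produces the $-Q((b+a)\sqrt{\gamma})$ term in $\mathcal{K}$) or, alternatively, restrict attention to the operating SNR range in which $\alpha\beta=ab\gamma=\sqrt{2}\,\gamma$ is large enough; once that bound is secured, the sign of the bracket and the replacement $\sqrt{t}\to\sqrt{\beta}$ are routine.
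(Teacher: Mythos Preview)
Your argument is essentially the paper's own proof: the paper likewise reduces to bounding $Q_{1}$, replaces $I_{0}$ by a smaller kernel, completes the square to produce the two Gaussians, and then uses $\sqrt{t}\geq\sqrt{\beta}$ on the non-negative bracket to extract the factor $\sqrt{\beta/\alpha}=\delta$. The only difference is packaging of the ``delicate'' step: the paper obtains your inequality $I_{0}(x)\geq(e^{x}-e^{-x})/\sqrt{2\pi x}$ by recognising the right-hand side as $I_{1/2}(x)$ and invoking the classical monotonicity of $I_{v}(x)$ in the order $v$ \cite{monotonbessel}, which holds for every $x>0$, so no restriction on the SNR range is needed and the step is in fact not delicate at all.
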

	
	\begin{proof} As $I_{v}$ is a decreasing function with respect to
		the index $v$ \cite{monotonbessel}, yields
		\begin{equation}
		Q_{1}\left(a\sqrt{\gamma},b\sqrt{\gamma}\right)\geqslant\mathcal{J},\label{Jiii}
		\end{equation}
		with
		\begin{equation}
		\mathcal{J}\triangleq\int_{b\sqrt{\gamma}}^{\infty}t\exp\left(-\frac{t^{2}+a^{2}\gamma}{2}\right)I_{\frac{1}{2}}\left(a\sqrt{\gamma}t\right)dt.\label{J-integral}
		\end{equation}
		
		Now applying \cite[Eq. (8.431.4)]{table integral} for $v=\frac{1}{2}$,
		one can ascertain
		\begin{equation}
		I_{\frac{1}{2}}\left(a\sqrt{\gamma}t\right)=\frac{2\sinh(a\sqrt{\gamma}t)}{\sqrt{2\pi a\sqrt{\gamma}t}},\label{besseli0.5}
		\end{equation}
		where $\sinh$$\left(.\right)$ accounts for the hyperbolic sine function.
		
		By plugging $\left(\ref{besseli0.5}\right)$ into $\left(\ref{J-integral}\right)$
		along with the following identity
		
		\begin{equation}
		\sinh\left(a\sqrt{\gamma}t\right)=\frac{\exp\left(a\sqrt{\gamma}t\right)-\exp\left(-a\sqrt{\gamma}t\right)}{2},\label{sinh}
		\end{equation}
		one can obtain
		\begin{equation}
		\mathcal{J}\geq\frac{1}{\sqrt{2\pi a}}\int_{b\sqrt{\gamma}}^{\infty}\sqrt{\frac{t}{\sqrt{\gamma}}}\left[\begin{array}{c}
		\exp\left(-\frac{\left(t-a\sqrt{\gamma}\right)^{2}}{2}\right)\\
		-\exp\left(-\frac{\left(t+a\sqrt{\gamma}\right)^{2}}{2}\right)
		\end{array}\right]dt.
		\end{equation}
		
		Finally, as $t\geq b\sqrt{\gamma}$, yields
		
		\begin{eqnarray}
		\mathcal{J} & \geq & \frac{\delta}{\sqrt{2\pi}}\left[\begin{array}{c}
		\int_{b\sqrt{\gamma}}^{\infty}\exp\left(-\frac{\left(t-a\sqrt{\gamma}\right)^{2}}{2}\right)dt\\
		-\int_{b\sqrt{\gamma}}^{\infty}\exp\left(-\frac{\left(t+a\sqrt{\gamma}\right)^{2}}{2}\right)dt
		\end{array}\right]\\
		& = & \frac{\delta}{\sqrt{2\pi}}\mathcal{K}(a,b,\gamma),\nonumber
		\end{eqnarray}
		which concludes the proof. \end{proof}
	
	\subsubsection{New upper bound for {BEP}}
	
	\begin{proposition} For $\gamma\geq0,$ holds
		\[
		\mathcal{H}_{2}\left(\gamma\right)\leq\mathcal{U}\left(\gamma\right),
		\]
		with
		\begin{equation}
		\mathcal{U}\left(\gamma\right)\triangleq\frac{1}{\delta}\mathcal{K}(a,b,\gamma)+\frac{1}{2}I_{0}\left(\sqrt{2}\gamma\right)\exp\left(-2\gamma\right).\label{upper}
		\end{equation}
	\end{proposition}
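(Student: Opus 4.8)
The plan is to mirror the lower–bound argument but on a \emph{bounded} integration window, which forces one extra ingredient: the self–reciprocity identity of the first–order Marcum $Q$-function. Starting from $\eqref{pep}$, the claim $\mathcal{H}_2(\gamma)\le\mathcal{U}(\gamma)$ is, after cancelling $\tfrac12 I_0(\sqrt2\gamma)e^{-2\gamma}$, equivalent to $Q_1(a\sqrt{\gamma},b\sqrt{\gamma})\le\tfrac1\delta\mathcal{K}(a,b,\gamma)+I_0(\sqrt2\gamma)e^{-2\gamma}$. I would first apply $Q_1(\alpha,\beta)+Q_1(\beta,\alpha)=1+\exp\!\big(-\tfrac{\alpha^2+\beta^2}{2}\big)I_0(\alpha\beta)$ with $\alpha=a\sqrt{\gamma}$, $\beta=b\sqrt{\gamma}$. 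Since $a^2+b^2=2(1-\sqrt{0.5})+2(1+\sqrt{0.5})=4$ and $ab=\sqrt{4(1-0.5)}=\sqrt2$, this collapses to $Q_1(a\sqrt{\gamma},b\sqrt{\gamma})=1+I_0(\sqrt2\gamma)e^{-2\gamma}-Q_1(b\sqrt{\gamma},a\sqrt{\gamma})$, hence $\mathcal{H}_2(\gamma)=\tfrac12 I_0(\sqrt2\gamma)e^{-2\gamma}+\big(1-Q_1(b\sqrt{\gamma},a\sqrt{\gamma})\big)$. It then suffices to prove $1-Q_1(b\sqrt{\gamma},a\sqrt{\gamma})\le\tfrac1\delta\mathcal{K}(a,b,\gamma)$.

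Next I would use $\int_0^\infty t\exp\!\big(-\tfrac{t^2+\nu^2}{2}\big)I_0(\nu t)\,dt=1$ together with $\eqref{mqf}$ to write $1-Q_1(b\sqrt{\gamma},a\sqrt{\gamma})=\int_0^{a\sqrt{\gamma}}t\exp\!\big(-\tfrac{t^2+b^2\gamma}{2}\big)I_0(b\sqrt{\gamma}\,t)\,dt$; the crucial feature is that the window is now the bounded interval $[0,a\sqrt{\gamma}]$. Dual to the step used for $\mathcal{L}(\gamma)$, the monotonicity of $I_v$ in $v$ \cite{monotonbessel} gives $I_0(b\sqrt{\gamma}\,t)\le I_{-1/2}(b\sqrt{\gamma}\,t)=\dfrac{2\cosh(b\sqrt{\gamma}\,t)}{\sqrt{2\pi b\sqrt{\gamma}\,t}}$ (the hyperbolic–cosine analogue of $\eqref{besseli0.5}$), and using $\cosh(b\sqrt{\gamma}\,t)\,e^{-(t^2+b^2\gamma)/2}=\tfrac12\big(e^{-(t-b\sqrt{\gamma})^2/2}+e^{-(t+b\sqrt{\gamma})^2/2}\big)$ one obtains
\[
1-Q_1(b\sqrt{\gamma},a\sqrt{\gamma})\;\le\;\frac{1}{\sqrt{2\pi b}}\int_0^{a\sqrt{\gamma}}\sqrt{\tfrac{t}{\sqrt{\gamma}}}\left[e^{-\frac{(t-b\sqrt{\gamma})^2}{2}}+e^{-\frac{(t+b\sqrt{\gamma})^2}{2}}\right]dt .
\]
On $[0,a\sqrt{\gamma}]$ we have $\sqrt{t/\sqrt{\gamma}}\le\sqrt{a}$ and $\tfrac{\sqrt a}{\sqrt{2\pi b}}=\tfrac1{\delta\sqrt{2\pi}}$; the shifts $u=t-b\sqrt{\gamma}$ and $u=t+b\sqrt{\gamma}$ turn the two Gaussian integrals into $\sqrt{2\pi}\big[Q((b-a)\sqrt{\gamma})-Q(b\sqrt{\gamma})\big]$ and $\sqrt{2\pi}\big[Q(b\sqrt{\gamma})-Q((a+b)\sqrt{\gamma})\big]$, whose sum is $\sqrt{2\pi}\,\mathcal{K}(a,b,\gamma)$ by $\eqref{KK}$. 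This yields $1-Q_1(b\sqrt{\gamma},a\sqrt{\gamma})\le\tfrac1\delta\mathcal{K}(a,b,\gamma)$; adding $\tfrac12 I_0(\sqrt2\gamma)e^{-2\gamma}$ to both sides gives $\mathcal{H}_2(\gamma)\le\mathcal{U}(\gamma)$.

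I expect the load–bearing step to be the reduction in the first paragraph rather than the bookkeeping in the second: on the native MQF domain $[b\sqrt{\gamma},\infty)$ the weight $\sqrt{t/\sqrt{\gamma}}$ is unbounded above and admits no useful majorant, so a direct bound of $Q_1(a\sqrt{\gamma},b\sqrt{\gamma})$ in the spirit of the lower bound cannot close. The self–reciprocity identity --- together with the algebraic coincidences $a^2+b^2=4$ and $ab=\sqrt2$ that make the reciprocity term line up exactly with the $\tfrac12 I_0(\sqrt2\gamma)e^{-2\gamma}$ already present in $\eqref{pep}$ --- is what converts the problem to a finite window where $\sqrt{t/\sqrt{\gamma}}\le\sqrt a$ is available. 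The only other point to state carefully is the direction of the Bessel estimate: here one decreases the index from $0$ to $-1/2$ (so $I_0\le I_{-1/2}$), the exact counterpart of the $I_0\ge I_{1/2}$ invoked for $\mathcal{L}(\gamma)$, and the remaining computation is the same Gaussian algebra already carried out there.
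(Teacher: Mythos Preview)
Your argument is correct, but it is not the route the paper takes. The paper stays on the original domain $[b\sqrt{\gamma},\infty)$ and, instead of invoking the self--reciprocity identity, integrates $Q_1(a\sqrt{\gamma},b\sqrt{\gamma})$ by parts with $u'(t)=t\,e^{-t^2/2}$, $w(t)=I_0(a\sqrt{\gamma}\,t)$; the boundary term is exactly $I_0(\sqrt{2}\gamma)e^{-2\gamma}$ and the remainder is $\mathcal{T}=\int_{b\sqrt{\gamma}}^{\infty}a\sqrt{\gamma}\,e^{-(t^2+a^2\gamma)/2}I_1(a\sqrt{\gamma}\,t)\,dt$. The paper then uses $I_1\le I_{1/2}$ (same monotonicity lemma, one index higher) and observes that the resulting weight is now $1/\sqrt{t}$, which \emph{is} bounded above by $1/\sqrt{b\sqrt{\gamma}}$ on $[b\sqrt{\gamma},\infty)$; this immediately yields $\mathcal{T}\le\tfrac{1}{\delta}\mathcal{K}(a,b,\gamma)$.

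So the two proofs trade one device for another: you use the Marcum reciprocity $Q_1(\alpha,\beta)+Q_1(\beta,\alpha)=1+e^{-(\alpha^2+\beta^2)/2}I_0(\alpha\beta)$ together with the happy numerics $a^2+b^2=4$, $ab=\sqrt{2}$ to move to a bounded window where $\sqrt{t}$ is majorised, while the paper uses an integration by parts (equivalently $I_0'=I_1$) to convert the troublesome factor $t$ into a constant $a\sqrt{\gamma}$, after which $1/\sqrt{t}$ is majorised on the unbounded window. Your remark that a ``direct bound in the spirit of the lower bound cannot close'' is accurate for the literal mirror ($I_0\le I_{-1/2}$ on $[b\sqrt{\gamma},\infty)$), but the paper's integration by parts is precisely the missing alternative you did not consider. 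Both approaches are short and yield the identical bound; yours has the minor advantage of exhibiting explicitly why the $I_0(\sqrt{2}\gamma)e^{-2\gamma}$ term appears with coefficient $1$ in $Q_1$ (hence $\tfrac12$ in $\mathcal{U}$), while the paper's is slightly more self--contained in that it does not import the reciprocity identity.
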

	
	\begin{proof} Relying on $\left(\ref{mqf}\right)$ and using integration
		by part by considering $u^{\prime}(t)=t\exp\left(-\frac{t^{2}}{2}\right)$
		and $w=I_{0}\left(\sqrt{a}\gamma t\right)$, one can see
		\[
		Q_{1}\left(a\sqrt{\gamma},b\sqrt{\gamma}\right)=\mathcal{T}+I_{0}\left(\sqrt{2}\gamma\right)\exp\left(-2\gamma\right),
		\]
		
		with
		
		\begin{equation}
		\mathcal{T=}\int_{b\sqrt{\gamma}}^{\infty}a\sqrt{\gamma}\exp\left(-\frac{t^{2}+a^{2}\gamma}{2}\right)I_{1}\left(a\sqrt{\gamma}t\right)dt\cdot\label{aa}
		\end{equation}
		
		Again, by incorporating the inequality $I_{1}\left(t\right)\leq I_{\frac{1}{2}}\left(t\right)$
		\cite{monotonbessel},\ alongside with $\left(\ref{besseli0.5}\right)$
		and $\left(\ref{sinh}\right)$ into $\left(\ref{aa}\right)$, we get
		\begin{equation}
		\mathcal{T\leq}\sqrt{\frac{a\sqrt{\gamma}}{2\pi}}\int_{b\sqrt{\gamma}}^{\infty}\frac{1}{\sqrt{t}}\left[\begin{array}{c}
		\exp\left(-\frac{\left(t-a\sqrt{\gamma}\right)^{2}}{2}\right)\\
		-\exp\left(-\frac{\left(t+a\sqrt{\gamma}\right)^{2}}{2}\right)
		\end{array}\right]dt.
		\end{equation}
		
		Moreover, as $t\geq b\sqrt{\gamma}$ in the aforementioned integrand,
		one can check
		\begin{equation}
		\mathcal{T}\leq\frac{1}{\delta}\mathcal{K}(a,b,\gamma).\label{aa3}
		\end{equation}
		
		Therefore $\left(\ref{upper}\right)$ can be inferred from $\left(\ref{aa3}\right)$
		jointly with $\left(\ref{KK}\right)$; this completes the proof. \end{proof}
	
	\subsubsection{Approximate {BEP} for DQPSK}
	
	In this part, a tight approximate expression for the {BEP} under DQPSK
	scheme is derived based on the two bounds presented above. In a similar
	manner to the approach followed in \cite{barzic}, the new proposed
	approximation is a linear combination of the two aforementioned bounds
	for $\mathcal{H}_{2}\left(\gamma\right)$, namely
	\begin{equation}
	\widetilde{\mathcal{H}}_{2}\left(\gamma\right)=\widetilde{\chi}\left(\gamma\right)\mathcal{U}\left(\gamma\right)+\left(1-\widetilde{\chi}\left(\gamma\right)\right)\mathcal{L}\left(\gamma\right).\text{ }\label{approx}
	\end{equation}
	
	\begin{proposition} The function $\chi\left(\gamma\right)$ can be
		chosen as
		\begin{equation}
		\widetilde{\chi}\left(\gamma\right)=\mathcal{C}_{0}\exp\left(-\mathcal{D}_{0}\gamma\right)+\mathcal{C}_{1}\exp\left(-\mathcal{D}_{1}\gamma\right),\label{formof X}
		\end{equation}
		where $\mathcal{C}_{i}$ and $\mathcal{D}_{i}$ are the best-fit parameters,
		depending on the SNR interval, summarized in Table \ref{optimumparametr}.
	\end{proposition}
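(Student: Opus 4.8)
The plan is to first exhibit the \emph{exact} weight that turns \eqref{approx} into an identity and establish its regularity, then to read off from its endpoint and tail behaviour why the two-exponential ansatz \eqref{formof X} is the natural one, and finally to fix the four constants by a numerical fit on each SNR subinterval. Solving \eqref{approx} with $\widetilde{\mathcal{H}}_{2}$ replaced by $\mathcal{H}_{2}$ yields the exact weight $\chi^{\star}(\gamma)=[\mathcal{H}_{2}(\gamma)-\mathcal{L}(\gamma)]/[\mathcal{U}(\gamma)-\mathcal{L}(\gamma)]$, which after inserting \eqref{pep}, \eqref{lower}, \eqref{upper} and \eqref{KK} and cancelling the common Bessel contribution becomes
\begin{equation}
\chi^{\star}(\gamma)=\frac{Q_{1}\!\left(a\sqrt{\gamma},b\sqrt{\gamma}\right)-\delta\,\mathcal{K}(a,b,\gamma)}{\left(\tfrac{1}{\delta}-\delta\right)\mathcal{K}(a,b,\gamma)+I_{0}\!\left(\sqrt{2}\gamma\right)\exp(-2\gamma)}.\label{chistar}
\end{equation}
First I would check that \eqref{chistar} is well defined and continuous on $[0,\infty)$: the denominator equals $\mathcal{U}(\gamma)-\mathcal{L}(\gamma)$, which is $1$ at $\gamma=0$ because $\mathcal{K}(a,b,0)=0$ and $I_{0}(0)=1$, and stays strictly positive for $\gamma>0$ since the monotonicity $I_{1}(x)<I_{\frac{1}{2}}(x)<I_{0}(x)$, $x>0$, invoked in the two preceding propositions is strict, so the lower-bound argument in fact delivers $Q_{1}(a\sqrt{\gamma},b\sqrt{\gamma})>\delta\,\mathcal{K}(a,b,\gamma)$, whence $\mathcal{H}_{2}(\gamma)-\mathcal{L}(\gamma)>0$ and, a fortiori, $\mathcal{U}(\gamma)-\mathcal{L}(\gamma)>0$. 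Those same two propositions give $0\le\chi^{\star}(\gamma)\le1$ throughout, so $\chi^{\star}$ is a bounded, smooth function of the SNR.

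Next I would extract the features that dictate the functional form. From \eqref{chistar} one has $\chi^{\star}(0)=1$, which forces $\mathcal{C}_{0}+\mathcal{C}_{1}$ to be close to $1$. For $\gamma\to\infty$ I would substitute the classical large-argument expansions of the first-order Marcum $Q$-function (through the series $Q_{1}(\alpha,\beta)=e^{-(\alpha^{2}+\beta^{2})/2}\sum_{k\ge0}(\alpha/\beta)^{k}I_{k}(\alpha\beta)$), of $Q(\cdot)$, and of $I_{0}(\sqrt{2}\gamma)\exp(-2\gamma)$ into \eqref{chistar}. Because the DQPSK constants satisfy $ab=\sqrt{2}$ and $a^{2}=(b-a)^{2}/2=2-\sqrt{2}$, every term in both the numerator and the denominator carries the \emph{same} dominant exponential factor $e^{-(2-\sqrt{2})\gamma}$; moreover the $\gamma^{-1/2}$ coefficients cancel identically in the numerator and, separately, in the denominator, so $\chi^{\star}(\gamma)$ tends to a finite, nonzero limit governed by the $\gamma^{-3/2}$ corrections. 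Hence $\chi^{\star}$ is a smooth, bounded curve interpolating between $\chi^{\star}(0)=1$ and that limit, with the transition controlled by a small number of competing sub-exponential rates, exactly the shape reproduced by \eqref{formof X}; this also explains why a single exponential cannot capture the crossover.

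Finally, since no single quadruple $(\mathcal{C}_{0},\mathcal{D}_{0},\mathcal{C}_{1},\mathcal{D}_{1})$ matches \eqref{chistar} uniformly over all of $[0,\infty)$, I would partition the SNR axis into the subintervals listed in Table \ref{optimumparametr} and, on each, determine the parameters by nonlinear least squares (the MATLAB curve-fitting routine), taking the rates identified above as initial guesses and minimising the maximum relative deviation of $\widetilde{\chi}$ from $\chi^{\star}$ on that piece. Substituting the fitted $\widetilde{\chi}$ back into \eqref{approx} then produces an $\widetilde{\mathcal{H}}_{2}$ that automatically lies between $\mathcal{L}$ and $\mathcal{U}$ and reproduces $\mathcal{H}_{2}$ to the fitted tolerance, the residual error being quantified numerically in the results section. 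The one genuinely delicate step is the asymptotic analysis of the previous paragraph: disentangling the mutually cancelling $\gamma^{-1/2}e^{-(2-\sqrt{2})\gamma}$ contributions of $Q_{1}$, $\mathcal{K}$ and $I_{0}(\sqrt{2}\gamma)\exp(-2\gamma)$ in order to locate the limiting value of $\chi^{\star}$ and the effective decay constants that justify the placement of $\mathcal{D}_{0}$ and $\mathcal{D}_{1}$; everything downstream of that is a bounded, smooth interpolation carried out numerically.
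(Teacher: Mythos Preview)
Your proposal is correct and follows the same skeleton as the paper: define the exact weight $\chi(\gamma)=[\mathcal{H}_{2}(\gamma)-\mathcal{L}(\gamma)]/[\mathcal{U}(\gamma)-\mathcal{L}(\gamma)]$, argue that it has an exponential-type profile, and then determine the four constants on each SNR subinterval by curve fitting in \textsc{Matlab}. The paper's own argument is considerably terser than yours---it simply plots $\chi(\gamma)$, \emph{observes} the exponential shape, and invokes the curve-fitting tool---whereas you add the well-posedness check ($\mathcal{U}-\mathcal{L}>0$ via the strict monotonicity $I_{1}<I_{1/2}<I_{0}$), the endpoint value $\chi^{\star}(0)=1$, and a large-$\gamma$ expansion exploiting $ab=\sqrt{2}$ and $a^{2}=(b-a)^{2}/2$ to explain why two exponentials, not one, are needed. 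That extra layer is sound and genuinely strengthens the justification (and is consistent with the fitted $\mathcal{D}_{1}\approx 5.6\times10^{-5}$ on the tail interval, which encodes the near-constant limit you predict), but it is not present in the paper; the authors rely purely on the visual evidence of Fig.~\ref{compCoeff}.
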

	
	\begin{table}[tbh]
		\caption{Optimum values of fitting parameters for different SNR ranges.}
		\label{optimumparametr} \centering{}%
		\begin{tabular}{c|c|c|c|c}
			\hline
			\backslashbox{SNR range}{$\mathcal{C}_{i}$, $\mathcal{D}_{i}$}  & $\mathcal{C}_{0}$  & $\mathcal{D}_{0}$  & $\mathcal{C}_{1}$  & $\mathcal{D}_{1}$\tabularnewline
			\hline
			$\gamma<1$  & $0.1786$  & $2.903$  & $0.7564$  & $0.1307$ \tabularnewline
			\hline
			$1\leq\gamma<8$  & $0.3798$  & $1.895$  & $0.6183$  & $7.93\times10^{-4}$ \tabularnewline
			\hline
			$\gamma>8$  & $0.005206$  & $0.2764$  & $0.6146$  & $5.593\times10^{-5}$ \tabularnewline
			\hline
		\end{tabular}
	\end{table}
	
	\begin{proof} First, note that the following function
		\begin{equation}
		\chi\left(\gamma\right)=\frac{\mathcal{H}\left(\gamma\right)-\mathcal{L}\left(\gamma\right)}{\mathcal{U}\left(\gamma\right)-\mathcal{L}\left(\gamma\right)},\label{XX}
		\end{equation}
		satisfies the identity
		\begin{equation}
		\mathcal{H}_{2}\left(\gamma\right)=\chi\left(\gamma\right)\mathcal{U}\left(\gamma\right)+\left(1-\chi\left(\gamma\right)\right)\mathcal{L}\left(\gamma\right).
		\end{equation}
		
		That is, it is sufficient to look for a tight approximation for $\left(\ref{XX}\right)$
		so as to approximate $\mathcal{H}_{2}\left(\gamma\right).$
		
		By plotting $\chi\left(\gamma\right)$ as shown in Fig. \ref{compCoeff},
		one can clearly notice its exponential behavior. It follows that its
		approximate expression can be written in the form (\ref{formof X}).
		Furthermore, the optimized coefficients $\mathcal{C}_{i}$ and $\mathcal{D}_{i}$
		outlined in Table \ref{optimumparametr}, for various SNR\ intervals
		can be straight forward obtained using a curve fitting tool (e.g.,
		Matlab Curve Fitting app); this ends the proof. \end{proof}
	
	\begin{remark} It is worthwhile that the first-order MQF can be approximated,
		relying on (\ref{pep}) and (\ref{approx}), by
		\begin{eqnarray}
		\widetilde{Q}_{1}\left(a\sqrt{\gamma},b\sqrt{\gamma}\right) & = & \widetilde{\mathcal{H}}_{0}\left(\gamma\right)+\frac{1}{2}I_{0}\left(\sqrt{2}\gamma\right)\exp\left(-2\gamma\right)\nonumber \\
		& = & \mathcal{K}(a,b,\gamma)\left(\eta\widetilde{\chi}\left(\gamma\right)+\delta\right)\nonumber \\
		&  & +I_{0}\left(\sqrt{2}\gamma\right)\exp\left(-2\gamma\right)\widetilde{\chi}\left(\gamma\right).
		\end{eqnarray}
	\end{remark}
	
	Both exact and approximate functions\ $\chi\left(\gamma\right)$
	and $\widetilde{\chi}\left(\gamma\right)$ are plotted in Fig. \ref{compCoeff}.
	One can observe that there exists a strong matching between the two
	curves over the entire range of $\gamma.$
	
	\begin{figure}[th]
		\hspace{0cm}\includegraphics[scale=0.46]{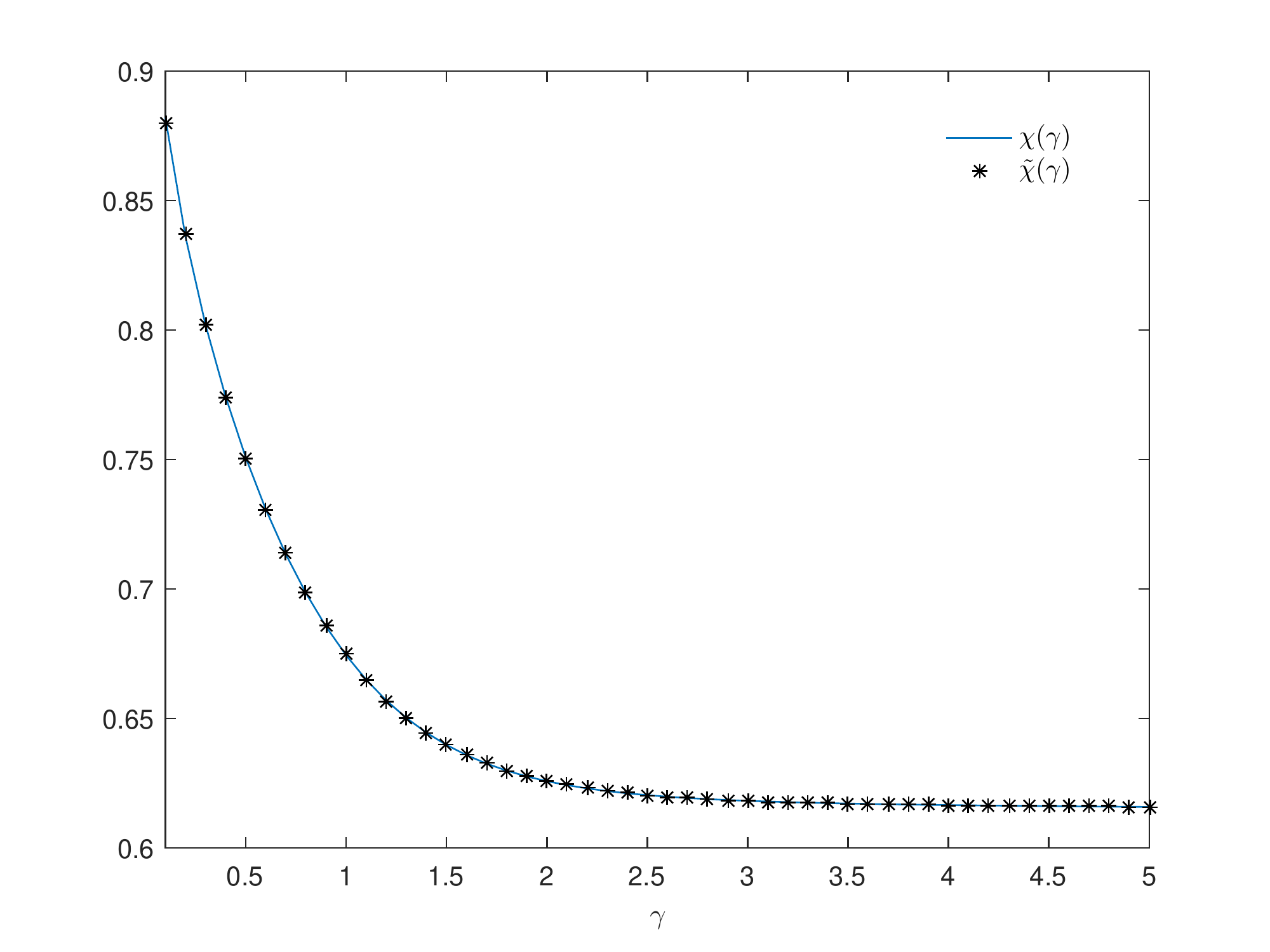} \caption{Comparison between $\chi\left(\gamma\right)$ and $\widetilde{\chi}\left(\gamma\right).$}
		\label{compCoeff}
	\end{figure}
	
	Table \ref{compa} summarizes the accuracy of the proposed approximation
	compared with the best ones proposed in the literature, namely $\left\{ \widetilde{\mathcal{H}}_{i}\left(\gamma\right)\right\} _{i=3..5}$
	labeled $\left\{ BER_{i+2}\right\} _{i=3..5}$ in \cite{barzic},
	respectively. Besides, the relative error corresponding to the aforementioned
	approximations, namely
	
	\[
	{\Large\varepsilon}_{i}=\frac{\left\vert \widetilde{\mathcal{H}}_{i}\left(\gamma\right)-\mathcal{H}_{2}\left(\gamma\right)\right\vert }{\mathcal{H}_{2}\left(\gamma\right)},i=2..5.,
	\]
	is depicted in Fig. \ref{abserror}. Obviously, the relative error
	corresponding to the proposed approximation outperforms those in \cite{barzic},
	except for a short interval, i.e., $[11.7,12.45]$ where $\mathcal{H}_{2}\left(\gamma\right)$
	is negligible, as outlined in Table II, compared to its values getting
	for small SNR.
	
	\begin{table*}
		\caption{Comparison between the exact and approximate {BEP}}
		
		\begin{center}
			\begin{tabular}{c|c|c|c|c|c}
				\hline
				$\gamma$  & Eq. ($\ref{pep})$  & $\widetilde{H}_{2}\left(\gamma\right),$ Eq. ($\ref{approx})$  & $\widetilde{H}_{3}\left(\gamma\right),$ \cite{barzic}  & $\widetilde{H}_{4}\left(\gamma\right),$ \cite{barzic}  & $\widetilde{H}_{5}\left(\gamma\right),$ \cite{barzic} \tabularnewline
				\hline
				$0.5$  & $2.6929\times10^{-1}$  & $2.6918\times10^{-1}$  & $2.7792\times10^{-1}$  & $2.6921\times10^{-1}$  & $2.6911\times10^{-1}$ \tabularnewline
				\hline
				$1$  & $1.6391\times10^{-1}$  & $1.6395\times10^{-1}$  & $1.6383\times10^{-1}$  & $1.6383\times10^{-1}$  & $1.6568\times10^{-1}$ \tabularnewline
				\hline
				$1.5$  & $1.0646\times10^{-1}$  & $1.0645\times10^{-1}$  & $1.0667\times10^{-1}$  & $1.0655\times10^{-1}$  & $1.0667\times10^{-1}$ \tabularnewline
				\hline
				$2$  & $7.1611\times10^{-2}$  & $7.1614\times10^{-2}$  & $7.1685\times10^{-2}$  & $7.1625\times10^{-2}$  & $7.1885\times10^{-2}$ \tabularnewline
				\hline
				$2.5$  & $4.9177\times10^{-2}$  & $4.9178\times10^{-2}$  & $4.9190\times10^{-2}$  & $4.9174\times10^{-2}$  & $4.9481\times10^{-2}$ \tabularnewline
				\hline
				$3$  & $3.4227\times10^{-2}$  & $3.4226\times10^{-2}$  & $3.4228\times10^{-2}$  & $3.4226\times10^{-2}$  & $3.4482\times10^{-2}$ \tabularnewline
				\hline
				$4$  & $1.7013\times10^{-2}$  & $1.7013\times10^{-2}$  & $1.7015\times10^{-2}$  & $1.7018\times10^{-2}$  & $1.7144\times10^{-2}$ \tabularnewline
				\hline
				$5$  & $8.6484\times10^{-3}$  & $8.6485\times10^{-3}$  & $8.6501\times10^{-3}$  & $8.6501\times10^{-3}$  & $8.7059\times10^{-3}$ \tabularnewline
				\hline
				$6$  & $4.4613\times10^{-3}$  & $4.4613\times10^{-3}$  & $4.4624\times10^{-3}$  & $4.4617\times10^{-3}$  & $4.4859\times10^{-3}$ \tabularnewline
				\hline
				$7$  & $2.3256\times10^{-3}$  & $2.3256\times10^{-3}$  & $2.3263\times10^{-3}$  & $2.3257\times10^{-3}$  & $2.3363\times10^{-3}$ \tabularnewline
				\hline
				$8$  & $1.2219\times10^{-3}$  & $1.2219\times10^{-3}$  & $1.2222\times10^{-3}$  & $1.2219\times10^{-3}$  & $1.2266\times10^{-3}$ \tabularnewline
				\hline
				$9$  & $6.4596\times10^{-4}$  & $6.4597\times10^{-4}$  & $6.4613\times10^{-4}$  & $6.4597\times10^{-4}$  & $6.4594\times10^{-4}$ \tabularnewline
				\hline
				$10$  & $3.4318\times10^{-4}$  & $3.4319\times10^{-4}$  & $3.4327\times10^{-4}$  & $3.4319\times10^{-4}$  & $3.4318\times10^{-4}$ \tabularnewline
				\hline
				$11$  & $1.8307\times10^{-4}$  & $1.8307\times10^{-4}$  & $1.8311\times10^{-4}$  & $1.8307\times10^{-4}$  & $1.8307\times10^{-4}$ \tabularnewline
				\hline
				$12$  & $9.7990\times10^{-5}$  & $9.7990\times10^{-5}$  & $9.8011\times10^{-5}$  & $9.7990\times10^{-5}$  & $9.7990\times10^{-5}$ \tabularnewline
				\hline
			\end{tabular}
			\par\end{center}
		\label{compa}
	\end{table*}
	
	\begin{figure}[th]
		\begin{center}
		\includegraphics[scale=0.46]{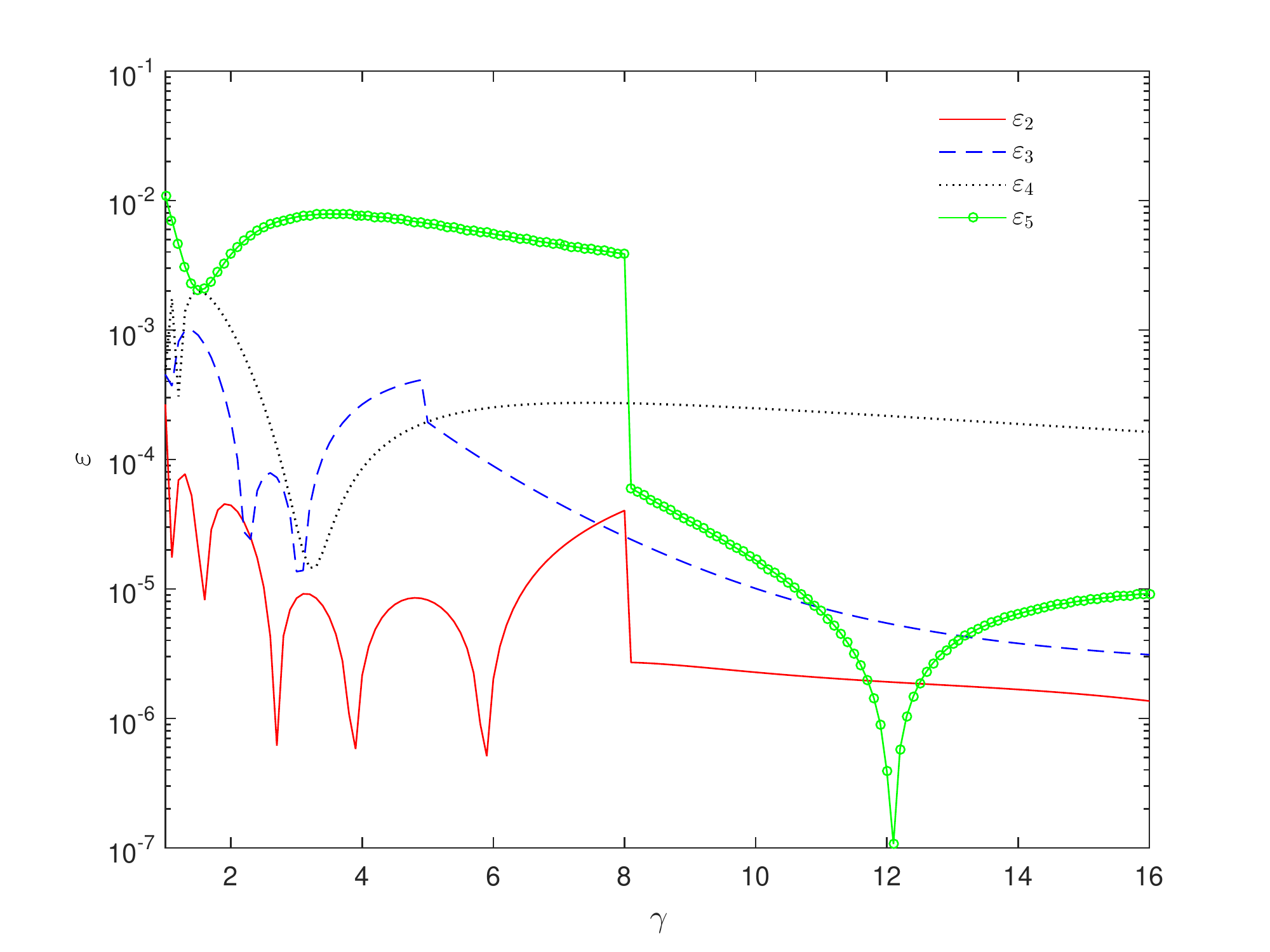} \caption{Comparison of the relative errors.}
				\end{center}
		\label{abserror}
	\end{figure}
	
	\begin{remark} It can be seen clearly that the proposed approximation
		outperforms the concurrent ones. Moreover, one can check that the
		three expressions $\left\{ \widetilde{\mathcal{H}}_{i}\left(\gamma\right)\right\} _{i=3..5}$
		are tough, which rend them useless in numerous applications such as
		the average bit error probability (ABEP) computation under a complicated fading model. Contrarily,
		the expression of $\widetilde{\mathcal{H}}_{2}\left(\gamma\right)$
		is quite simple, making it more appropriate for various fields. \end{remark}
	
	\section{{AEP} Analysis}
	
	As mentioned above, the proposed approximate {EP} is used to derive
	an approximate {AEP} when communicating over $\kappa-
	{\mu}
	$ shadowed fading.
	
	The PDF of instantaneous SNR $\gamma$ under the $\kappa-
	{\mu}
	$ shadowed fading model can be written as \cite[Eq. 4]{kapamu}
	\begin{equation}
	f_{\gamma}(\gamma)=\frac{\lambda}{\Gamma\left(\mu\right)}\gamma^{\mu-1}e_{{}}^{-\nu\gamma}{}_{1}F_{1}\left(m;\mu;\omega\gamma\right),\label{pdfKmu}
	\end{equation}
	with
	
	\begin{equation}
	\lambda=\frac{\mu^{\mu}m^{m}\left(1+\mathcal{\kappa}\right)^{\mu}}{\overline{\gamma}^{\mu}\left(\mu\mathcal{\kappa}+m\right)^{m}},\text{ }\nu=\frac{\mu\left(1+\mathcal{\kappa}\right)}{\overline{\gamma}},\text{ }\omega=\frac{\mu^{2}\mathcal{\kappa}\left(1+\mathcal{\kappa}\right)}{\overline{\gamma}\left(\mu\mathcal{\kappa}+m\right)},\label{eq:constants}
	\end{equation}
	where $\overline{\gamma}$ is the average SNR, $\mathcal{\kappa}$
	indicates the power ratio between the dominant waves, $\mu$ refers to
	the scattered components, while $m$ accounts for the shape parameter.
	Further, $_{1}F_{1}\left(.;.;.\right)$ and $\Gamma\left(.\right)$
	denote the Kummer confluent hypergeometric and Euler Gamma functions,
	respectively.
	
	The {AEP} approximation for both $M$-PSK and DQPSK schemes can be
	straightforwardly evaluated as
	\begin{equation}
	{{P_{e}}}=\int_{0}^{\infty}f\left(\gamma\right)\mathcal{H}_{i}\left(\gamma\right)d\gamma,\label{ASERformula}
	\end{equation}
	by setting $i=1$ and $i=2$, respectively.
	\begin{remark}
		{{$P_{e}$} accounts for the average symbol error probability (ASEP) for $M$-PSK modulation, while it refers to the ABEP for DQPSK modulation with Gray coding.}
	\end{remark}
	\begin{proposition} In the case of $M$-PSK modulation, the ASEP
		over $\kappa-
		{\mu}
		$ shadowed fading model can be approximated as
		\begin{equation}
		{{P_{e}}}\simeq\lambda\sum_{l=1}^{7}\mathcal{A}_{l}\left(\nu+\mathcal{B}_{l}\right)^{-\mu}\left(1-\frac{\omega}{\nu+\mathcal{B}_{l}}\right)^{-m}.\label{ASERMPSK}
		\end{equation}
	\end{proposition}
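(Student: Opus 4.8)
The plan is to substitute the exponential-sum approximation $\widetilde{\mathcal{H}}_{1}(\gamma)\simeq\sum_{l=1}^{7}\mathcal{A}_{l}\exp(-\mathcal{B}_{l}\gamma)$ from Proposition~1 into the averaging integral \eqref{ASERformula} with $i=1$, interchange the finite sum with the integral, and then evaluate each resulting integral
\[
\mathcal{A}_{l}\int_{0}^{\infty}f_{\gamma}(\gamma)\exp\left(-\mathcal{B}_{l}\gamma\right)d\gamma
\]
in closed form using the explicit $\kappa$-$\mu$ shadowed PDF \eqref{pdfKmu}. Plugging in \eqref{pdfKmu}, the generic term becomes
\[
\frac{\lambda\mathcal{A}_{l}}{\Gamma(\mu)}\int_{0}^{\infty}\gamma^{\mu-1}e^{-(\nu+\mathcal{B}_{l})\gamma}\,{}_{1}F_{1}\left(m;\mu;\omega\gamma\right)d\gamma.
\]

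The key step is a standard Laplace-type integral identity for the Kummer function: for suitable parameters,
\[
\int_{0}^{\infty}t^{\mu-1}e^{-st}\,{}_{1}F_{1}(m;\mu;ct)\,dt=\Gamma(\mu)\,s^{-\mu}\left(1-\frac{c}{s}\right)^{-m},
\]
which follows e.g.\ from \cite[Eq. (7.621.4)]{table integral} (the Laplace transform of $t^{\mu-1}{}_{1}F_{1}$ expressed via ${}_{2}F_{1}$, which collapses to a binomial term because one of its numerator parameters equals one of the arguments, i.e.\ ${}_{2}F_{1}(m,\mu;\mu;x)=(1-x)^{-m}$). Applying this with $s=\nu+\mathcal{B}_{l}$ and $c=\omega$ cancels the $\Gamma(\mu)$ in the denominator and yields exactly the summand $\lambda\mathcal{A}_{l}(\nu+\mathcal{B}_{l})^{-\mu}\bigl(1-\tfrac{\omega}{\nu+\mathcal{B}_{l}}\bigr)^{-m}$; summing over $l=1,\dots,7$ gives \eqref{ASERMPSK}.

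The main obstacle is convergence/validity of the integral identity: the ${}_{2}F_{1}$ series (equivalently the binomial term) requires $|\omega/(\nu+\mathcal{B}_{l})|<1$, i.e.\ $\omega<\nu+\mathcal{B}_{l}$. Since $\mathcal{B}_{l}>0$ for all $l$, it suffices to verify $\omega<\nu$; from \eqref{eq:constants} one has $\omega/\nu=\mu\kappa/(\mu\kappa+m)<1$, so the condition holds unconditionally for every physically meaningful parameter choice, and the factor $\bigl(1-\tfrac{\omega}{\nu+\mathcal{B}_{l}}\bigr)^{-m}$ is well defined and positive. The interchange of the finite sum and the integral is immediate (finitely many terms, each integrable by the above). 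I would also remark that, because the derivation only uses the exponential form of the approximant, the same argument applies verbatim to any EP approximation written as a finite sum of decaying exponentials, which is precisely why the bound on $\widetilde{\mathcal{H}}_{2}$ is engineered into that shape as well.
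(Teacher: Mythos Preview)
Your proposal is correct and follows essentially the same route as the paper's proof: substitute \eqref{APProx-MPSK} and \eqref{pdfKmu} into \eqref{ASERformula}, then evaluate the resulting Laplace-type integral of $t^{\mu-1}e^{-st}{}_{1}F_{1}(m;\mu;\omega t)$ via a standard table identity (the paper cites \cite[7.522(9) and 9.121(1)]{table integral}, you cite \cite[7.621(4)]{table integral} together with ${}_{2}F_{1}(m,\mu;\mu;x)=(1-x)^{-m}$, which are the same two facts). Your explicit check that $\omega/\nu=\mu\kappa/(\mu\kappa+m)<1$ guarantees convergence is a welcome addition that the paper omits.
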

	
	\begin{proof} By plugging $\left(\ref{APProx-MPSK}\right)$ and $\left(\ref{pdfKmu}\right)$
		in $\left(\ref{ASERformula}\right),$and with the help of \cite[7.522 Eq. (9) and 9.121 Eq. (1)]{table integral}
		one can obtain $\left(\ref{ASERMPSK}\right).$ \end{proof}
	
	\begin{proposition} The {ABEP} for DQPSK scheme with Gray coding over
		the $\kappa-
		{\mu}
		$ shadowed fading channel can be tightly approximated as
		
		\begin{equation}
		{{P_{e}}}\simeq\frac{\lambda\eta}{\Gamma\left(\mu\right)}\sum_{i=0}^{2}\left[\mathcal{C}_{i}\mathcal{M}_{i,k}^{\left(1\right)}(a)-\mathcal{C}_{i}\mathcal{M}_{i,k}^{\left(1\right)}(-a)+\frac{\mathcal{F}_{i}}{\eta}\mathcal{M}_{i,k}^{\left(2\right)}\right],\label{approx-ABER}
		\end{equation}
		with
		\begin{eqnarray}
		\mathcal{M}_{i,k}^{\left(1\right)}(a) & = & \sum_{k=0}^{\infty}\frac{\phi_{i,k}\text{ }_{2}F_{1}\left(\frac{1}{2},\mu+k;\mu+k+1;\frac{2\xi_{i}}{\left(b-a\right)^{2}+2\xi_{i}}\right)}{2\sqrt{\pi}},\label{integ1}\\
		\mathcal{M}_{i,k}^{\left(2\right)} & = & \sum_{k=0}^{\infty}\psi_{i,k}\text{ }_{2}F_{1}\left(\frac{\mu+k}{2},\frac{\mu+k+1}{2};1;\frac{2}{\left(2+\xi_{i}\right)^{2}}\right),\label{integ2}
		\end{eqnarray}
		
		\begin{eqnarray}
		\phi_{i,k} & = & \frac{\left(m\right)_{k}\omega^{k}}{\left(\mu\right)_{k}k!}\left(\frac{2}{\left(b-a\right)^{2}+2\xi_{i}}\right)^{\mu+k}\frac{\Gamma\left(\mu+k+\frac{1}{2}\right)}{\mu+k},\\
		\psi_{i,k} & = & \frac{\Gamma\left(\mu\right)\left(m\right)_{k}\omega^{k}}{k!\left(2+\xi_{i}\right)^{\mu+k}},
		\end{eqnarray}
		
		\begin{equation}
		\xi_{i}=\nu+\mathcal{D}_{i},
		\end{equation}
		
		\begin{equation}
		\eta=\frac{1-\delta^{2}}{\delta},
		\end{equation}
		where $\left(.\right)_{.}$ represents the pochhammer symbol, $_{2}F_{1}\left(.,.;.;.\right)$
		denotes the hypergeometric functions \cite[Eq. (8.310)]{table integral},
		$\mathcal{F}_{0}=\mathcal{C}_{0},\mathcal{F}_{1}=\mathcal{C}_{1},\mathcal{F}_{2}=-\frac{1}{2},\mathcal{F}_{2}=\frac{\delta^{2}}{1-\delta^{2}},$
		and $\mathcal{D}_{2}=0.$ \end{proposition}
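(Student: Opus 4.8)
The plan is to substitute the approximate BEP $\widetilde{\mathcal{H}}_{2}$ of \eqref{approx} into the averaging integral \eqref{ASERformula} with $i=2$, expand the PDF \eqref{pdfKmu}, and reduce the whole expression to two tabulated Laplace-type transforms. As a first step I would recast $\widetilde{\mathcal{H}}_{2}$ in the factored form already exhibited in the Remark on $\widetilde{Q}_{1}\left(a\sqrt{\gamma},b\sqrt{\gamma}\right)$: combining \eqref{lower}, \eqref{upper} and \eqref{approx} and using $\eta=\frac{1-\delta^{2}}{\delta}$ yields
\[
\widetilde{\mathcal{H}}_{2}\left(\gamma\right)=\mathcal{K}(a,b,\gamma)\left(\eta\widetilde{\chi}\left(\gamma\right)+\delta\right)+I_{0}\left(\sqrt{2}\gamma\right)e^{-2\gamma}\left(\widetilde{\chi}\left(\gamma\right)-\tfrac{1}{2}\right).
\]
Inserting $\widetilde{\chi}\left(\gamma\right)=\sum_{i=0}^{1}\mathcal{C}_{i}e^{-\mathcal{D}_{i}\gamma}$ from \eqref{formof X}, the two leftover constants $\delta$ and $-\tfrac{1}{2}$ are folded into an extra index $i=2$ with $\mathcal{D}_{2}=0$, carrying weight $\mathcal{C}_{2}:=\tfrac{\delta}{\eta}=\tfrac{\delta^{2}}{1-\delta^{2}}$ in the $\mathcal{K}$-bracket and weight $\mathcal{F}_{2}=-\tfrac{1}{2}$ in the $I_{0}$-bracket (with $\mathcal{F}_{0}=\mathcal{C}_{0}$, $\mathcal{F}_{1}=\mathcal{C}_{1}$). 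Since $\mathcal{K}(a,b,\gamma)=Q\left((b-a)\sqrt{\gamma}\right)-Q\left((b+a)\sqrt{\gamma}\right)$ by \eqref{KK}, the $Q\left((b+a)\sqrt{\gamma}\right)$ piece is exactly what turns into the $\mathcal{M}_{i,k}^{\left(1\right)}(-a)$ term of \eqref{approx-ABER}.

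Next I would write the confluent hypergeometric factor of \eqref{pdfKmu} as ${}_{1}F_{1}\left(m;\mu;\omega\gamma\right)=\sum_{k\geq0}\frac{(m)_{k}}{(\mu)_{k}k!}\left(\omega\gamma\right)^{k}$, plug it together with $\widetilde{\mathcal{H}}_{2}$ into \eqref{ASERformula}, and interchange the $k$-summation with the $\gamma$-integration. Every factor involved -- $\gamma^{\mu-1+k}$, $e^{-\nu\gamma}$, $Q(\cdot)$, $I_{0}(\cdot)$, and the coefficients $\tfrac{(m)_{k}\omega^{k}}{(\mu)_{k}k!}$ for $m,\mu,\omega>0$ -- is nonnegative, so Tonelli's theorem legitimizes the swap at once, and the resulting integrals converge because $\nu-\omega=\frac{\mu m\left(1+\kappa\right)}{\overline{\gamma}\left(\mu\kappa+m\right)}>0$ forces $\xi_{i}=\nu+\mathcal{D}_{i}>\omega$ while $\left(2+\xi_{i}\right)^{2}>2$. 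What remains are the two elementary families $\int_{0}^{\infty}\gamma^{\mu+k-1}e^{-\xi_{i}\gamma}Q\left((b\mp a)\sqrt{\gamma}\right)d\gamma$ and $\int_{0}^{\infty}\gamma^{\mu+k-1}e^{-\left(2+\xi_{i}\right)\gamma}I_{0}\left(\sqrt{2}\gamma\right)d\gamma$.

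For the first family I would use $Q(x)=\tfrac{1}{2}\mathrm{erfc}(x/\sqrt{2})$ together with a tabulated Laplace transform of $\gamma^{s-1}e^{-\xi_{i}\gamma}\mathrm{erfc}(\sqrt{z\gamma})$ \cite{table integral}, which produces a ${}_{2}F_{1}$ of argument $\xi_{i}/(\xi_{i}+z)$ with $z=(b\mp a)^{2}/2$; a standard Euler/Pfaff transformation then rewrites it with the parameters $\left(\tfrac{1}{2},\mu+k;\mu+k+1\right)$ appearing in \eqref{integ1}, while the power factor $(\xi_{i}+z)^{-(\mu+k)}=\left(\frac{2}{(b\mp a)^{2}+2\xi_{i}}\right)^{\mu+k}$ and the $\tfrac{1}{2}$ from $Q$ together reproduce the coefficient $\phi_{i,k}$. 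For the second family I would expand $I_{0}(z)={}_{0}F_{1}\left(;1;z^{2}/4\right)$, integrate term by term, and collapse the double series via the Legendre duplication formula to $\frac{\Gamma\left(\mu+k\right)}{\left(2+\xi_{i}\right)^{\mu+k}}{}_{2}F_{1}\left(\frac{\mu+k}{2},\frac{\mu+k+1}{2};1;\frac{2}{\left(2+\xi_{i}\right)^{2}}\right)$; using $(\mu)_{k}=\Gamma\left(\mu+k\right)/\Gamma\left(\mu\right)$, the Kummer-weighted $k$-th term becomes exactly $\psi_{i,k}$ times the ${}_{2}F_{1}$ of \eqref{integ2}. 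Summing over $k$ to recover ${}_{1}F_{1}$, collecting the $\mathcal{K}$-part and the $I_{0}$-part, factoring $\frac{\lambda\eta}{\Gamma\left(\mu\right)}$ out front and relabelling then yields \eqref{approx-ABER}.

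The conceptual content here is light; most of the effort is the two transform evaluations, and the one genuinely delicate point is matching the hypergeometric argument in \eqref{integ1}. The raw Laplace transform of the error-function term lands on argument $\xi_{i}/(\xi_{i}+z)$ with ``shifted'' parameters, and only the correct Euler (equivalently Pfaff) transformation, combined with the bookkeeping $z=(b\mp a)^{2}/2$ and the rewriting of the power factor, brings it into the precise form claimed; one must also keep track of the $-a$ substitution so that the $(b+a)$-integral is correctly identified with $\mathcal{M}_{i,k}^{\left(1\right)}(-a)$. Everything else -- the Kummer expansion, the Tonelli interchange, and the duplication-formula collapse of the $I_{0}$ integral -- is routine.
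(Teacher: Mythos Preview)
Your proposal is correct and follows the same overall strategy as the paper: rewrite $\widetilde{\mathcal{H}}_{2}$ as a sum over $i=0,1,2$ of a $\mathcal{K}$-piece and an $I_{0}$-piece, expand ${}_{1}F_{1}(m;\mu;\omega\gamma)$ as a power series in $\gamma$, swap summation and integration, and evaluate the two resulting one-dimensional integrals. The only substantive difference is in how the $Q$-function integral is handled. The paper does not touch $\mathrm{erfc}$ at all: it inserts Craig's representation $Q(x)=\frac{1}{\pi}\int_{0}^{\pi/2}\exp\bigl(-x^{2}/(2\sin^{2}\theta)\bigr)\,d\theta$, performs the $\gamma$-integral first via \cite[Eq.~(3.381.4)]{table integral}, and then evaluates the remaining $\theta$-integral with \cite[Eq.~(3.682)]{table integral}, which lands directly on the ${}_{2}F_{1}\bigl(\tfrac{1}{2},\mu+k;\mu+k+1;\cdot\bigr)$ of \eqref{integ1} with no hypergeometric transformation needed. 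Your route via a tabulated Laplace transform of $\gamma^{s-1}e^{-\xi_{i}\gamma}\mathrm{erfc}(\sqrt{z\gamma})$ plus an Euler/Pfaff step is equally valid but introduces precisely the ``delicate point'' you identify; the paper's Craig-based computation simply avoids it. For $\mathcal{M}^{(2)}$ the paper just cites \cite[Eq.~(6.621.1)]{table integral}, which is the closed form you re-derive from the ${}_{0}F_{1}$ expansion and the duplication formula.
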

	
	\begin{proof} First, one can check using $\left(\ref{lower}\right)$
		and $\left(\ref{upper}\right)$ jointly with $\left(\ref{KK}\right),$
		$\left(\ref{approx}\right),$ and $\left(\ref{formof X}\right)$
		\begin{equation}
		\widetilde{\mathcal{H}}_{0}\left(\gamma\right)=\eta\sum_{i=0}^{2}\left[\mathcal{C}_{i}e^{-\mathcal{D}_{i}\gamma}\mathcal{K}(a,b,\gamma)+\mathcal{F}_{i}e^{-\left(2+\mathcal{D}_{i}\right)\gamma}I_{0}\left(\sqrt{2}\gamma\right)\right].\label{Hdelta2}
		\end{equation}
		On the other hand, the {ABEP} for DQPSK can be evaluated as
		\begin{equation}
		P_{s}=\int_{0}^{\infty}f\left(\gamma\right)\mathcal{H}_{2}\left(\gamma\right)d\gamma.\label{ABER}
		\end{equation}
		Now, using $\left(\ref{pdfKmu}\right)$ and $\left(\ref{Hdelta2}\right)$
		in $\left(\ref{ABER}\right),$ the {ABEP} can be approximated by
		
		\begin{equation}
		{P_{e}}\simeq\frac{\lambda\eta}{\Gamma\left(\mu\right)}\sum_{i=0}^{2}\left[\mathcal{C}_{i}\mathcal{M}_{i,k}^{\left(1\right)}(a)-\mathcal{C}_{i}\mathcal{M}_{i,k}^{\left(1\right)}(-a)+\frac{\mathcal{F}_{i}}{\eta}\mathcal{M}_{i,k}^{\left(2\right)}\right]
		\end{equation}
		where 
		\begin{eqnarray}
		\mathcal{M}_{i}^{\left(1\right)}(a)=\int_{0}^{\infty}\gamma^{\mu-1}\text{ }_{1}F_{1}\left(m;\mu;\omega\gamma\right)e^{-\left(\mathcal{D}_{i}+v\right)\gamma}\nonumber \\
		\times Q\left(\left(b-a\right)\sqrt{\gamma}\right)d\gamma,\label{integral1}
		\end{eqnarray}
		
		and
		\begin{equation}
		\mathcal{M}_{i}^{\left(2\right)}=\int_{0}^{\infty}\gamma^{\mu-1}\text{ }_{1}F_{1}\left(m;\mu;\omega\gamma\right)e^{-\left(2+\xi_{i}\right)\gamma}I_{0}\left(\sqrt{2}\gamma\right)d\gamma.\label{integ22}
		\end{equation}
		Utilizing Craig's formula of the Gaussian $Q$-function \cite[Eq. (5)]{craigs}
		and \cite[9.14.1]{table integral}, $\left(\ref{integral1}\right)$
		can be written as
		\begin{eqnarray}
		\mathcal{M}_{i}^{\left(1\right)}(a) & = & \sum_{k=0}^{\infty}\frac{\left(m\right)_{k}}{\left(\mu\right)_{k}k!}\frac{\omega^{k}}{\pi}\int_{0}^{\frac{\pi}{2}}\int_{0}^{\infty}\gamma^{\mu+k-1}\nonumber \\
		&  & \times\exp\left(-\left(\frac{\left(b-a\right)^{2}}{2\sin^{2}\left(\theta\right)}+\xi_{i}\right)\gamma\right)d\gamma d\theta\label{integrl11}
		\end{eqnarray}
		By evaluating the inner integral in $\left(\ref{integrl11}\right)$
		with the aid of \cite[Eq. (3.381.4)]{table integral}, we get
		\begin{eqnarray}
		\mathcal{M}_{i}^{\left(1\right)}(a) & = & \sum_{k=0}^{\infty}\frac{\left(m\right)_{k}\omega^{k}\Gamma\left(\mu+k\right)\int_{0}^{\frac{\pi}{2}}\left(\frac{\left(b-a\right)^{2}}{2\sin^{2}\left(\theta\right)}+2\xi_{i}\right)^{-\mu-k}d\theta}{\pi\left(\mu\right)_{k}k!}\nonumber \\
		& = & \sum_{k=0}^{\infty}\frac{\left(m\right)_{k}\omega^{k}\Gamma\left(\mu+k\right)2^{\mu+k}}{\pi\left(\mu\right)_{k}k!}\nonumber \\
		&  & \times\int_{0}^{\frac{\pi}{2}}\left(\frac{\sin^{2}\left(\theta\right)}{\left(b-a\right)^{2}+2\xi_{i}-2\xi_{i}\cos^{2}\left(\theta\right)}\right)^{\mu+k}d\theta\text{ }\label{intgr1_1}
		\end{eqnarray}
		Now, by substituting \cite[Eq. (3.682)]{table integral} into $\left(\ref{intgr1_1}\right),$
		$\left(\ref{integ1}\right)$ is obtained. Finally, $\left(\ref{integ22}\right)$
		can be evaluated relying on \cite[Eqs. (9.14.1) and (6.621.1)]{table integral}
		to obtain $\left(\ref{integ2}\right)$, which concludes the proof.
	\end{proof}
	
	\subsection{Asymptotic Analysis}
	
	In order to gain further insights into system parameters at high SNR
	regime, an asymptotic analysis for the SNR is carried out. Firstly,
	note that for large values of $\overline{\gamma}$, one can see that
	$\omega$ goes to $0$ and thus the term $k=0$ dominates the others,
	$v$ also goes to $0$ (i.e., $\xi_{i}\simeq\mathcal{D}_{i}$). It
	follows that the {AEP} can be asymptotically approximated as
	
	\begin{equation}
	{P_{e}}\sim\lambda\sum_{l=1}^{7}\mathcal{A}_{l}\mathcal{B}_{l}^{-\mu}.\label{asyptMPSK}
	\end{equation}
	and 
	\begin{equation}
	{P_{e}}\sim\frac{\lambda\eta}{\Gamma\left(\mu\right)}\sum_{i=0}^{2}\left[\begin{array}{c}
	\mathcal{C}_{i}\mathcal{M}_{i}^{\left(1,asy\right)}(a)-\mathcal{C}_{i}\mathcal{M}_{i}^{\left(1,asy\right)}(-a)\\
	+\frac{\mathcal{F}_{i}}{\eta}\mathcal{M}_{i}^{\left(2,asy\right)}
	\end{array}\right],\label{asymptotic}
	\end{equation}
	
	for $M$-PSK and DQPSK schemes, respectively, with
	\begin{eqnarray}
	\mathcal{M}_{i}^{\left(1,asy\right)}(a) & \sim & \Delta\text{ }_{2}F_{1}\left(\frac{1}{2},\mu;\mu+1;\frac{2\mathcal{D}_{i}}{\left(b-a\right)^{2}+2\mathcal{D}_{i}}\right),\label{asymp1}\\
	\Delta & = & \frac{1}{2\sqrt{\pi}}\left(\frac{2}{\left(b-a\right)^{2}+2\mathcal{D}_{i}}\right)^{\mu}\frac{\Gamma\left(\mu+\frac{1}{2}\right)}{\mu},\\
	\mathcal{M}_{i}^{\left(2,asy\right)} & \sim & \frac{\Gamma\left(\mu\right)}{\left(2+\mathcal{D}_{i}\right)^{\mu}}\text{ }_{2}F_{1}\left(\frac{\mu}{2},\frac{\mu+1}{2};1;\frac{2}{\left(2+\mathcal{D}_{i}\right)^{2}}\right).\label{asymp2}
	\end{eqnarray}
	
	It is worth mentioning from \eqref{asyptMPSK} and \eqref{asymptotic}
	alongside with \eqref{eq:constants} that the diversity order equals
	$\mu$.
	
	\subsection{Bound on the truncation error}
	
	The above approximate {ABEP} for DQPSK is expressed in terms of infinite
	series. Truncating such summation and estimating the truncated error
	is though of paramount importance for numerical evaluation purposes.
	In what follows, a closed-form bound for such truncation error is provided.
	
	Using $\left(\ref{integ1}\right),$ the truncation up to $L-1$ terms
	of the first summation results to the following error
	\begin{equation}
	\epsilon_{i}^{(1)}(a)=\sum_{k=L}^{\infty}\frac{\phi_{k}}{2\sqrt{\pi}}\text{ }_{2}F_{1}\left(\frac{1}{2},\mu+k;\mu+k+1;\frac{2\xi_{i}}{\left(b-a\right)^{2}+2\xi_{i}}\right)\label{erorinteg1}
	\end{equation}
	
	By changing the summation index to $j=k-L$ in $\left(\ref{erorinteg1}\right)$,
	then using \cite[Eq. (06.10.02.0001.01)]{wolframe}, and performing
	some manipulations, the bound can be expressed as
	\begin{eqnarray}
	\epsilon_{i}^{(1)}(a) & \leq & \frac{\Theta_{i,L}(a)}{2\sqrt{\pi}}\text{ }_{1}F_{0}\left(\frac{1}{2};-;\frac{2\xi_{i}}{\left(b-a\right)^{2}+2\xi_{i}}\right)\label{erorinteg1Bound}\\
	&  & \times\text{ }_{2}F_{1}\left(1,m+L;L+1;\frac{2\omega}{\left(b-a\right)^{2}+2\xi_{i}}\right),\nonumber
	\end{eqnarray}
	
	with
	\begin{eqnarray}
	\Theta_{i,L}(a) & = & \left(\frac{2}{\left(b-a\right)^{2}+2\xi_{i}}\right)^{\mu}\frac{\Gamma\left(\mu\right)\Gamma\left(m+L\right)}{L!\Gamma\left(m\right)}\\
	&  & \times\left(\frac{2\omega}{\left(b-a\right)^{2}+2\xi_{i}}\right)_{{}}^{L}.\nonumber
	\end{eqnarray}
	
	In a similar manner, the truncated error of the summation $\left(\ref{integ2}\right)$
	can be upper bounded by
	
	\begin{eqnarray}
	\epsilon_{i}^{(2)} & \leq & \Lambda_{i,L}\text{ }_{2}F_{1}\left(\frac{\mu+L}{2},\frac{\mu+L+1}{2};1;\frac{2}{\left(2+\xi_{i}\right)^{2}}\right)\label{erorinteg2bound}\\
	&  & \times\text{ }_{2}F_{1}\left(1,m+L;L+1;\frac{\omega}{2+\xi_{i}}\text{ }\right),\nonumber
	\end{eqnarray}
	
	with
	\begin{equation}
	\Lambda_{i,L}=\frac{\Gamma\left(\mu\right)\omega^{L}}{L!\Gamma\left(m+L\right)\Gamma\left(m\right)\left(2+\xi_{i}\right)^{\mu+L}}.
	\end{equation}
	
	Consequently, and having in mind that $\epsilon_{i}^{(j)}$ are positives,
	as can be seen from \ref{erorinteg1Bound} and \ref{erorinteg2bound},
	the absolute value of the total truncated error can be upper bounded
	by
	\begin{equation}
	\left|\epsilon_{P_{b}}\right|\leq\frac{\lambda\eta}{\Gamma\left(\mu\right)}\sum_{i=0}^{2}\left[\mathcal{C}_{i}\epsilon_{i}^{(1)}(a)+\mathcal{C}_{i}\epsilon_{i}^{(1)}(-a)+\frac{\mathcal{F}_{i}}{\eta}\epsilon_{i}^{(2)}\right].\label{erorASER}
	\end{equation}

	\section{Results and Discussion}
	
	In this section, the proposed approximation for the {AEP} versus SNR
	(in dB) for both $M$-PSK and DQPSK modulation schemes over $\kappa-
	{\mu}
	$ shadowed fading channel is evaluated and compared with the exact one
	for various fading severity parameters.
	\begin{itemize}
		
		\item Figs. \ref{weakmpsk} and \ref{weakdqpsk} illustrate, for a fixed
		value of $m$, the effect of parameter $\mu$ on the {AEP} for $M$-PSK
		and DQPSK modulation techniques, respectively under a weak line of sight (LOS)
		condition. One can notice that the greater the $\mu$ is, the better
		the system's performance.
		
		\item Figs. \ref{strongmpsk} and \ref{strongdqpsk} depict the {AEP} for
		both considered modulation schemes under strong LOS ($\kappa=10$) for a fixed
		value of $\mu$. It is observed that countering the effect of shadowing
		requires the increase of $m$.
		
		\item To show the versatility of the $\kappa-\mu$ shadowed fading, Figs.
		\ref{nakampsk}, \ref{classmpsk}, and \ref{classdqpsk} present the
		{AEP} for some classical fading models. Noteworthy, the results in
		all figures are provided for either integer or non-integer values
		of $\mu$ and $m$. Further, the simulation curves match perfectly
		with the proposed approximation.
		
		\item Fig. \ref{Truncerror} depicts the absolute value of the truncated
		error versus the number of limited terms $L$ for DQPSK modulation.
		It can be shown that the greater $L$ is, the smaller such an error. Interestingly,
		the truncated error decreases with the increase of SNR.
		
		\item Lastly, Fig. \ref{diversity} presents the achievable diversity order for both
		modulation schemes versus the average SNR, computed by evaluating
		$-\frac{\log P_{s}}{\log\bar{\gamma}}$. It is clearly noticed that
		such a metric goes to $\mu$ as $\bar{\gamma}$ tends to infinity.
		
	\end{itemize}
	\begin{figure}[th]
		\begin{center}
			\hspace{0cm}\includegraphics[scale=0.46]{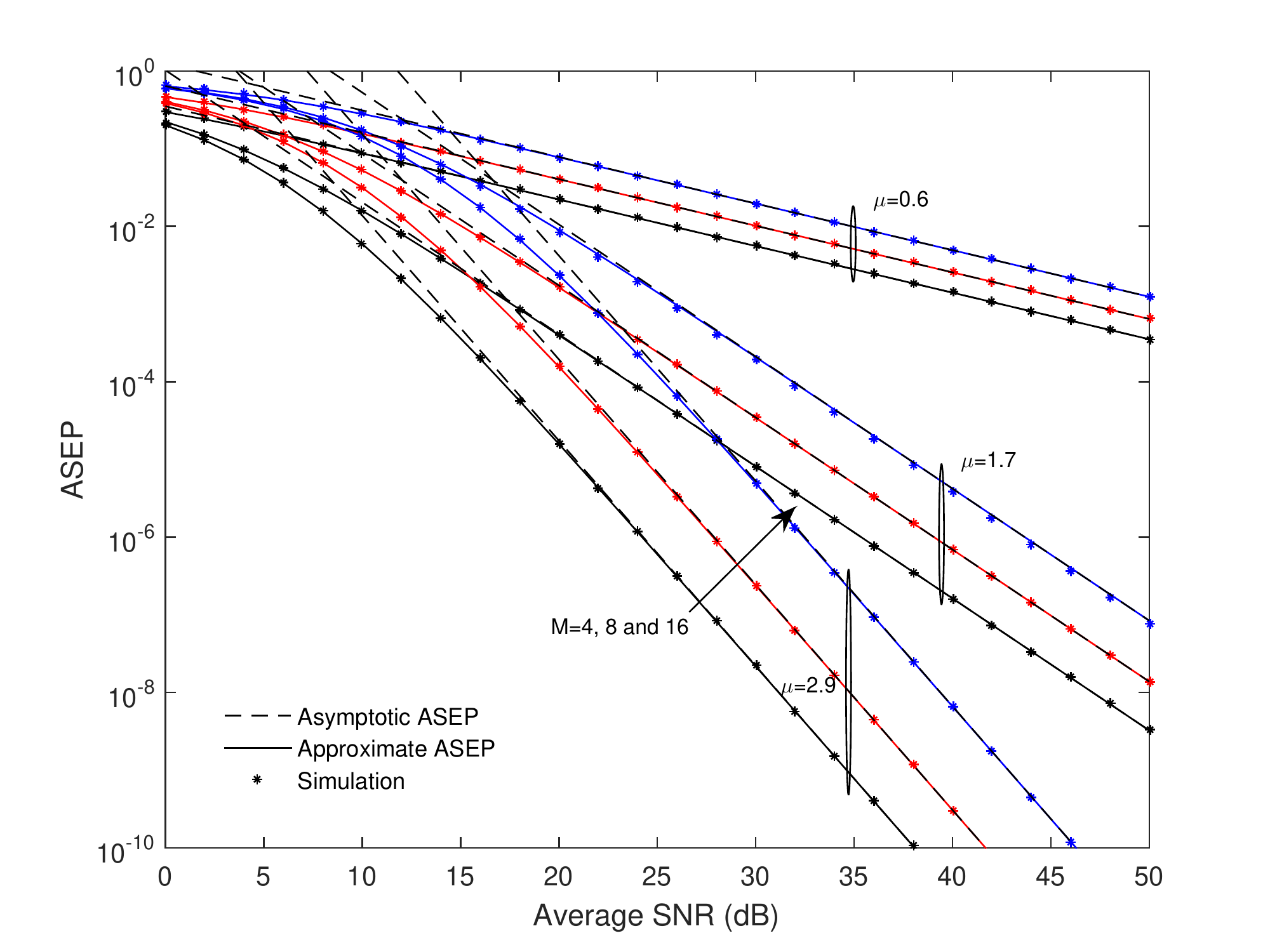} \
		\end{center}
		\caption{ASEP for $M$-PSK under weak LOS scenario ($\kappa=1)$ with different
			values of $\mu$ and $m=1.3.$}
		\label{weakmpsk}
	\end{figure}
	
	\begin{figure}[th]
		\begin{center}
			\hspace{0cm}\includegraphics[scale=0.46]{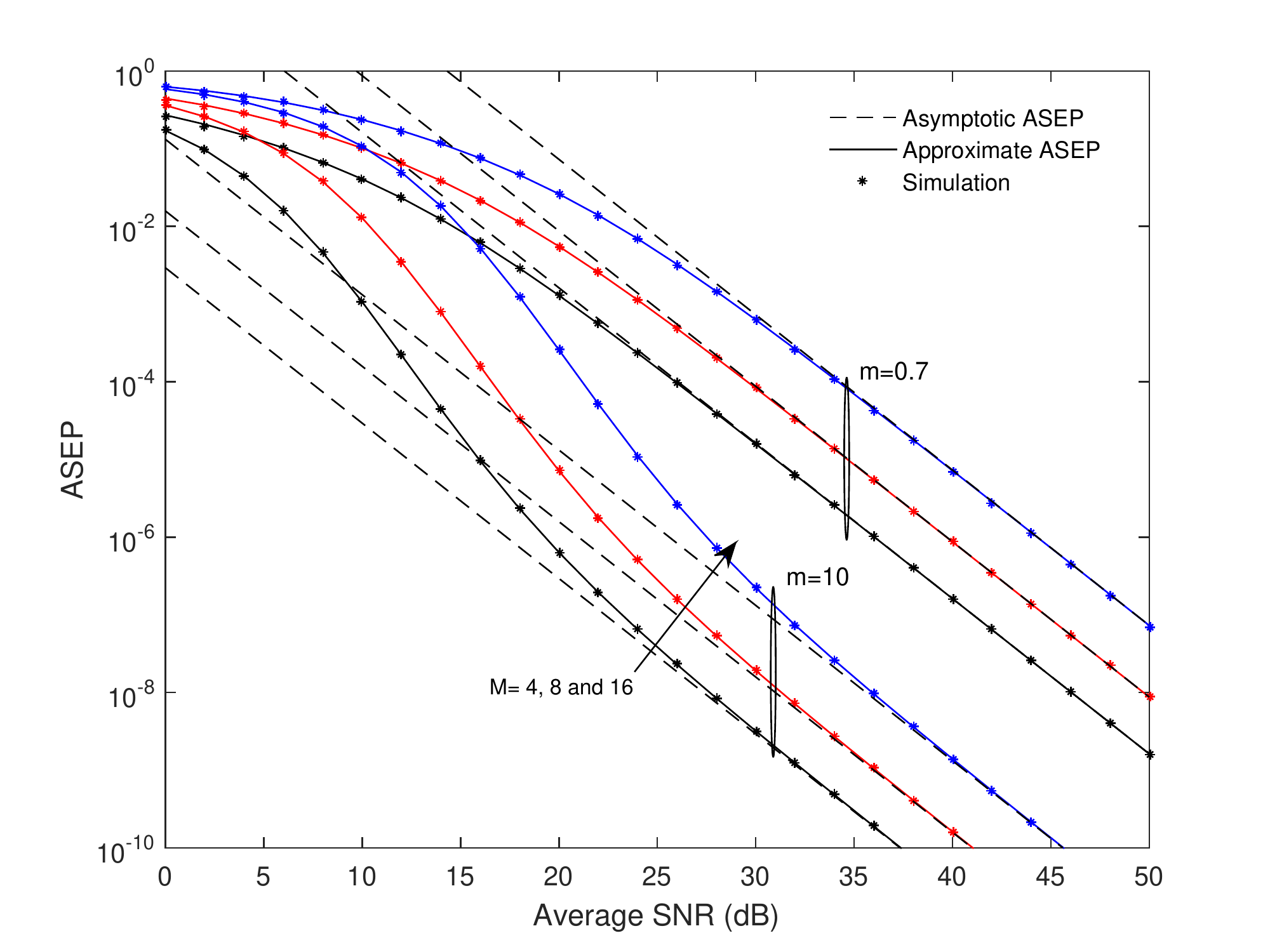} \
		\end{center}
		\caption{ASEP for $M$-PSK under strong LOS scenario ($\kappa=10)$ with different
			values of $m$ and $\mu=2.$}
		\label{strongmpsk}
	\end{figure}
	
	\begin{figure}[th]
		\begin{center}
			\hspace{0cm}\includegraphics[scale=0.46]{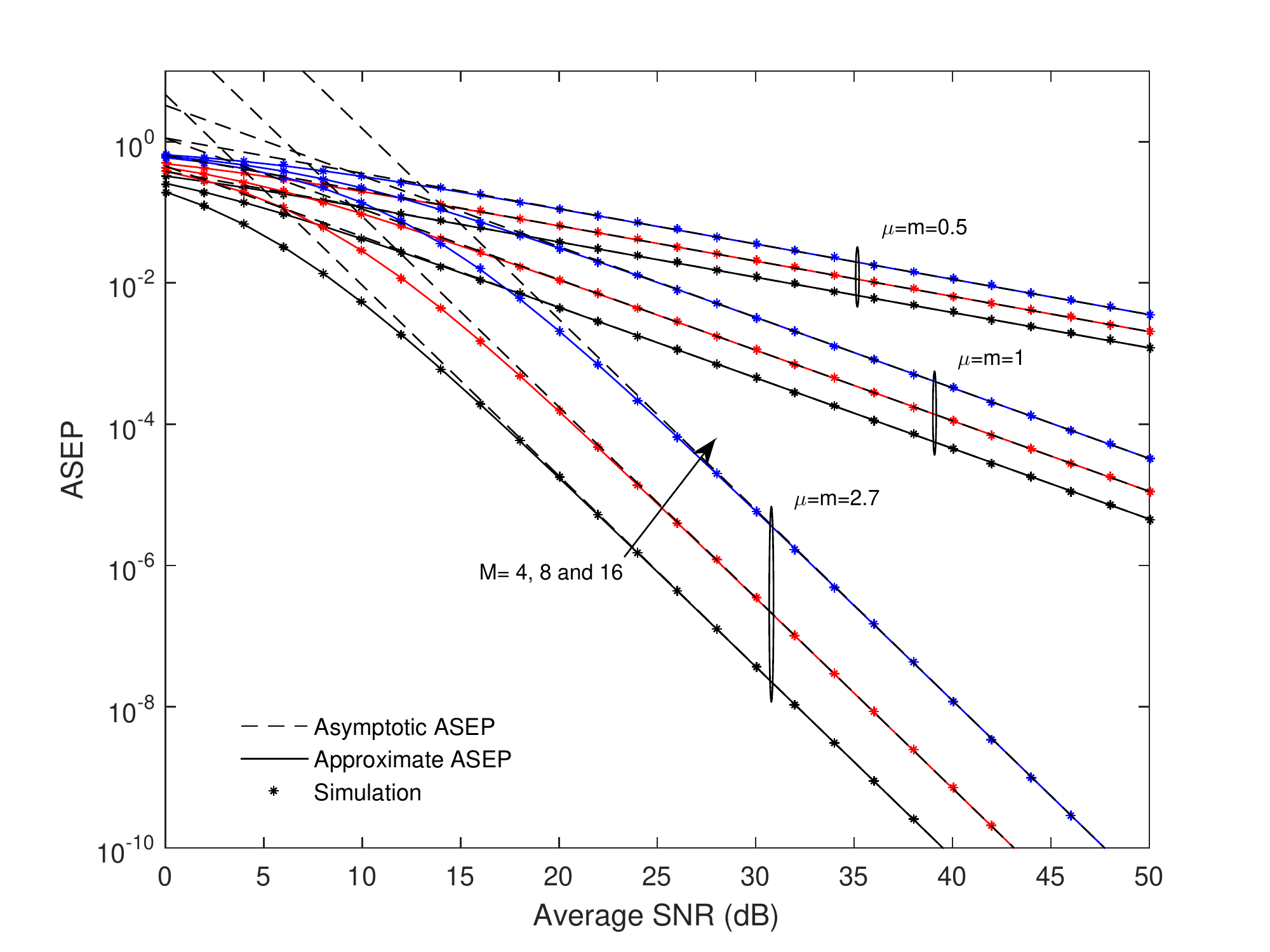} \
		\end{center}
		\caption{ASEP for $M$-PSK under non-LOS scenario ($\kappa=0)$ with $\mu=m.$}
		\label{nakampsk}
	\end{figure}
	
	\begin{figure}[th]
		\begin{center}
			\hspace{0cm}\includegraphics[scale=0.46]{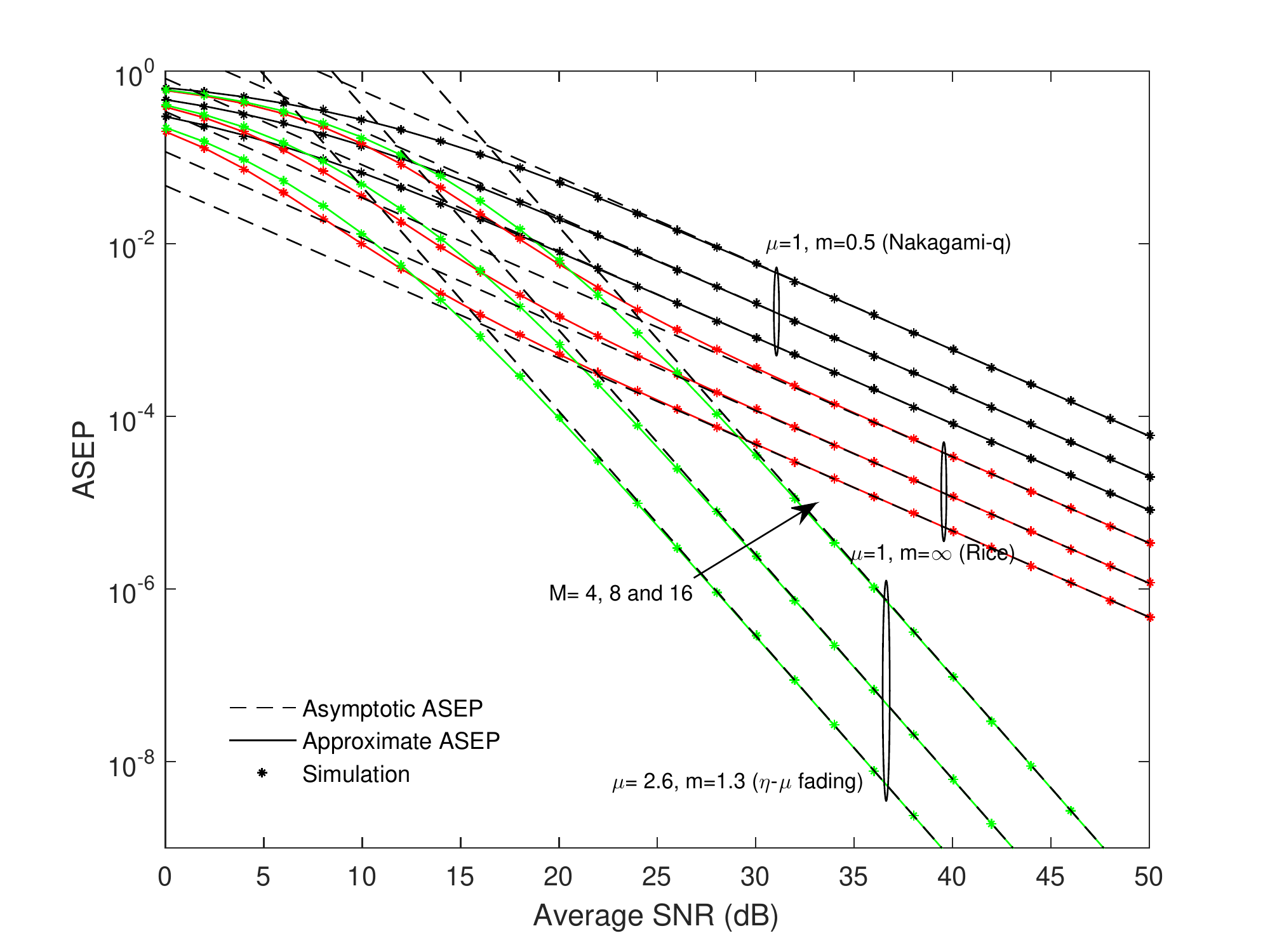} \
		\end{center}
		\caption{ASEP for $M$-PSK over various practical fading models with $\kappa=5.$}
		\label{classmpsk}
	\end{figure}
	
	\begin{figure}[th]
		\begin{center}
			\hspace{0cm}\includegraphics[scale=0.46]{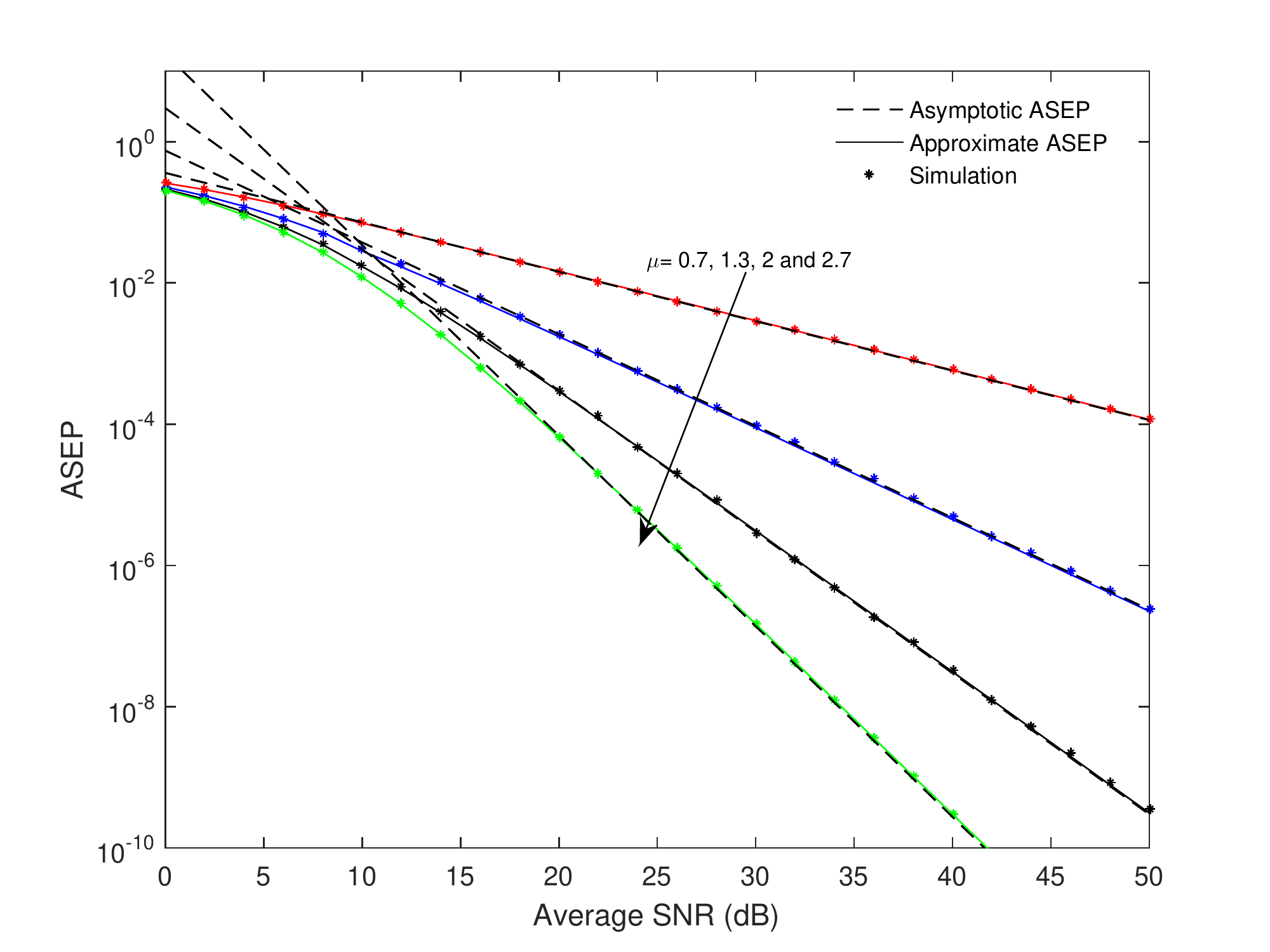} \
		\end{center}
		\caption{ABEP for DQPSK under weak LOS scenario ($\kappa=1)$ with different
			values of $\mu$ and $m=1.3.$}
		\label{weakdqpsk}
	\end{figure}
	
	\begin{figure}[th]
		\begin{center}
			\hspace{0cm}\includegraphics[scale=0.46]{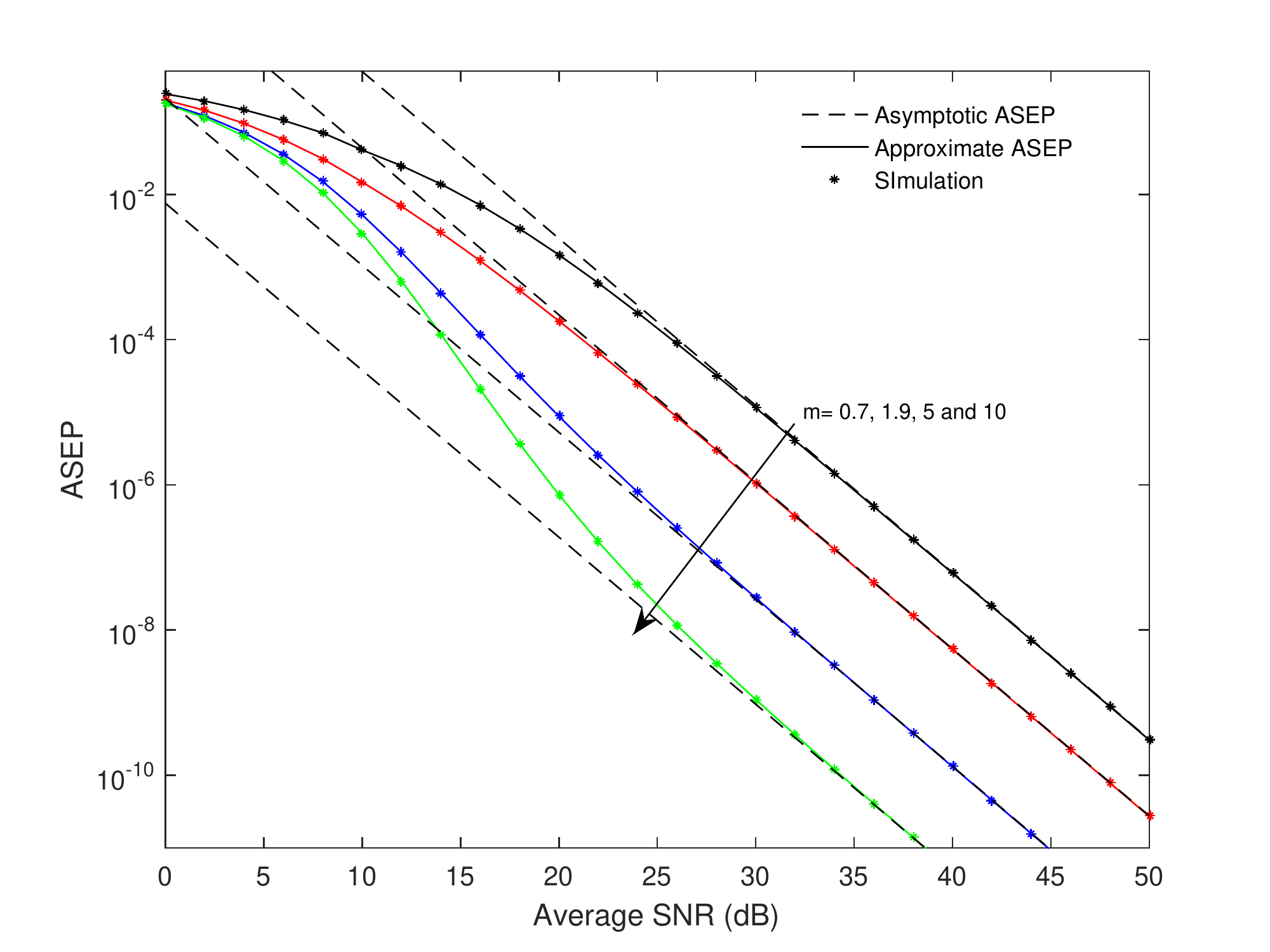} \
		\end{center}
		\caption{ABEP for DQPSK under strong LOS scenario ($\kappa=10)$ with different
			values of $m$ and $
			{\mu}
			=2.3.$}
		\label{strongdqpsk}
	\end{figure}
	
	\begin{figure}[th]
		\begin{center}
			\hspace{0cm}\includegraphics[scale=0.46]{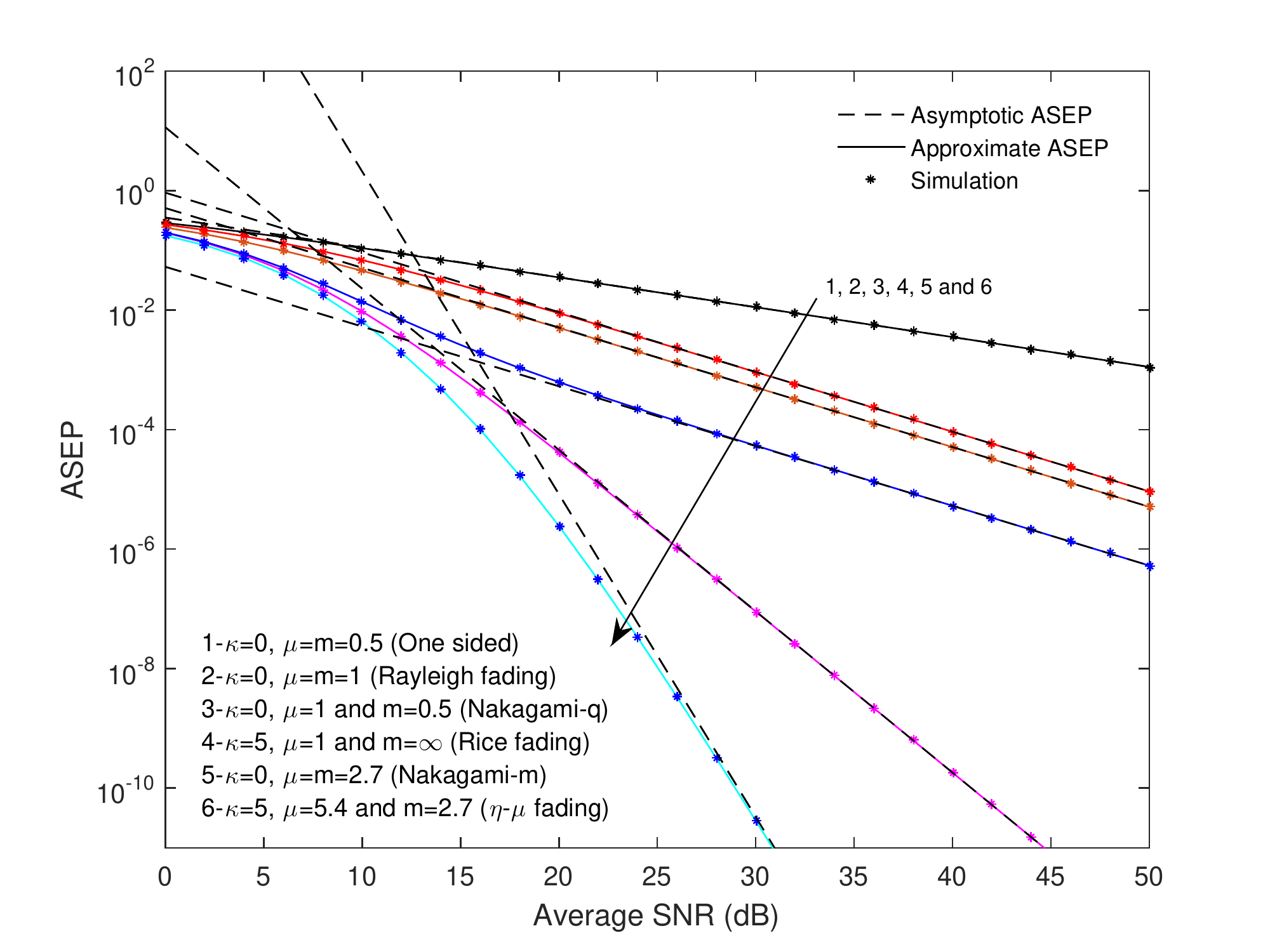} \
		\end{center}
		\caption{ABEP for DQPSK over numerous practical fading distributions with $\kappa=5.$}
		\label{classdqpsk}
	\end{figure}
	
	\begin{figure}[th]
		\begin{center}
			\hspace{0cm}\includegraphics[scale=0.46]{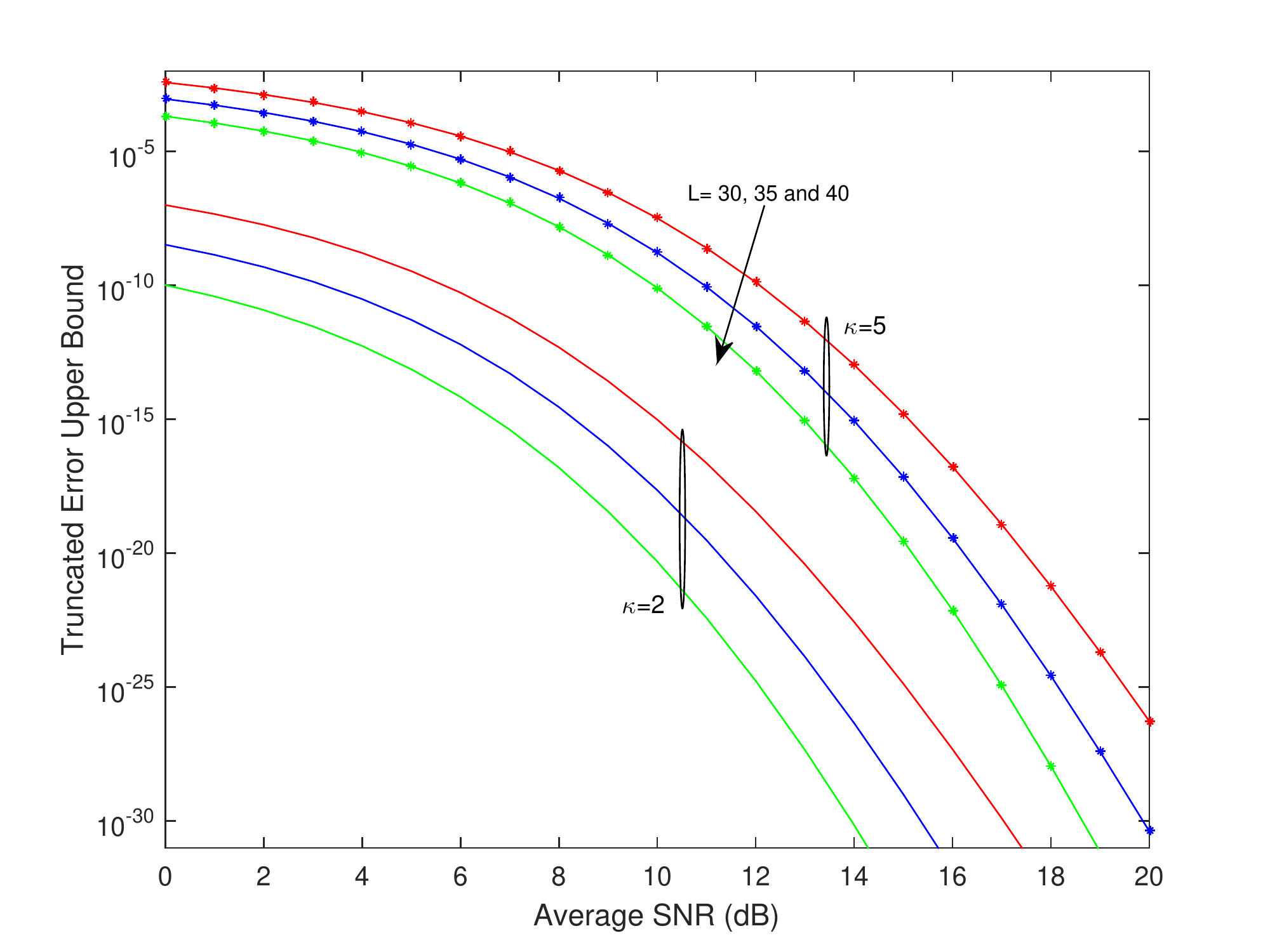} \
		\end{center}
		\caption{Upper bound for the truncated error for $\mu=2.3$ and $m=4.7$ and
			various values of $\kappa$ and $L$.}
		\label{Truncerror}
	\end{figure}
	
	\begin{figure}[th]
		\begin{center}
			\hspace{0cm}\includegraphics[scale=0.46]{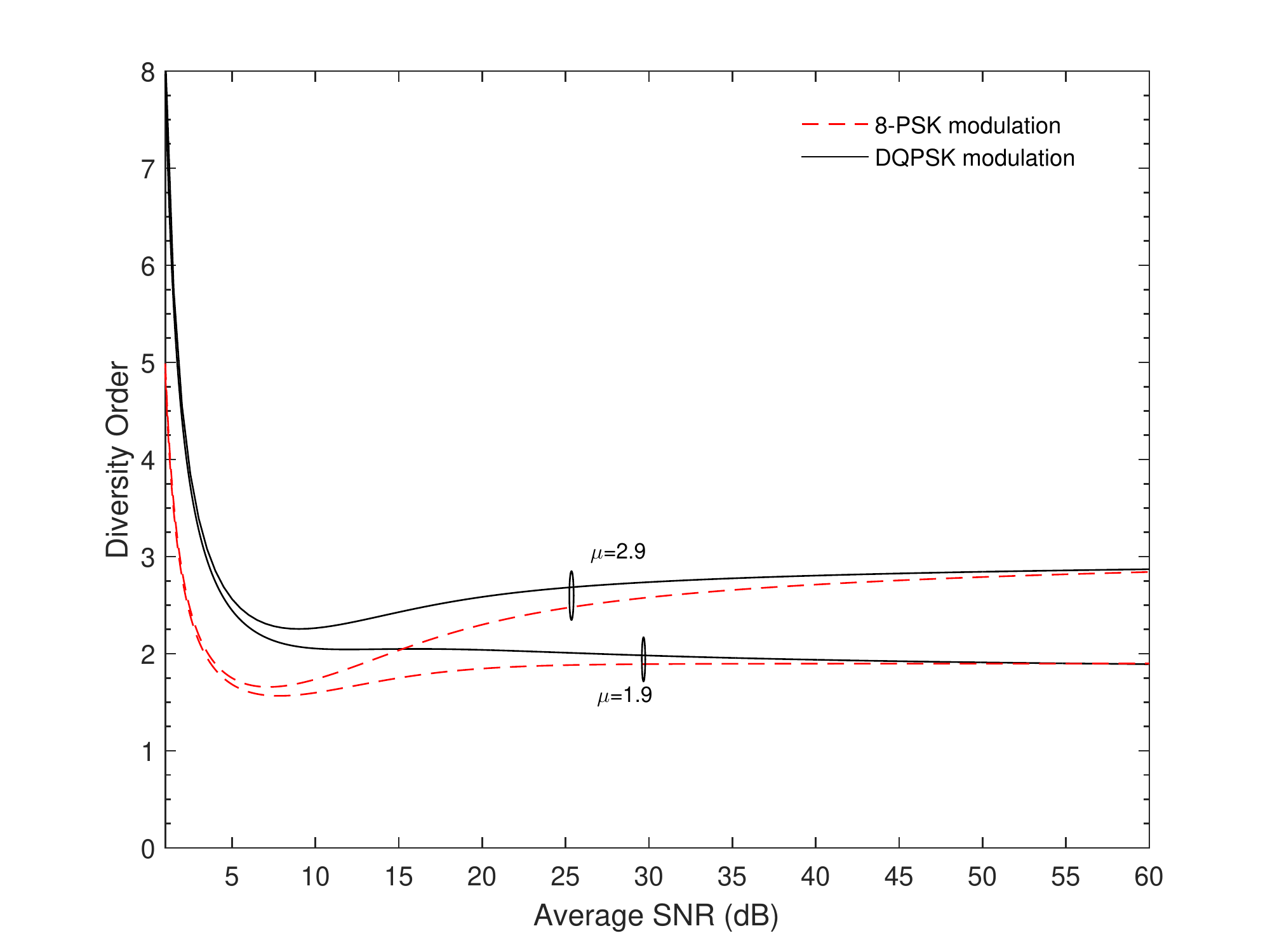} \
		\end{center}
		\caption{Diversity order for $\kappa=5$, $m=4.7$ and various values of $\mu$
			under $M$-PSK and DQPSK modulation schemes.}
		\label{diversity}
	\end{figure}

	\section{Conclusion}
	
	New approximate expressions for the {EP} of a communication system
	employing either $M$-PSK or DQPSK modulation have been derived. The
	proposed approximations ensures optimal accuracy-analytical tractability
	trade-off that enables its versatility to contribute to the {AEP} computation
	over generalized fading channel. The resulting accuracy is better
	than that reached by other existing works relying on more complex
	mathematical expressions. Furthermore, a new closed-form approximation
	for the {AEP} under $\kappa-\mu$ shadowed fading model has been investigated
	and it is accurate for all the practical values of the SNRs, and are
	valid for the entire range of the shaping parameters $\kappa,\mu$,
	and $m$. As far as we know, no previous works dealt with such fading
	and modulation scheme with such a simple approximation. As a future
	aspect, the authors aim to extend the same approach on more general
	fading model such as $\alpha-\kappa-\mu$ shadowed \cite{alphkmushadowed}
	and fluctuating Beckmann fading models \cite{FB}.
	
	\section*{Abbreviations}
	AEP: average error probability; ABEP: average bit error probability; ASEP: average symbol error probability; AWGN: additive white Gaussian noise; DQPSK: differential quaternary phase-shift keying; EP:error probability; LOS: line
	of sight; MQF: Marcum $Q$-function of the first order; MBF: modified
	Bessel function; PDF: probability density function; PSK: phase-shift
	keying modulation schemes; SEP: symbol error probability; SNR: signal-to-noise
	ratio; UIFH: upper incomplete upper Fox's H-function.
	
	\section*{Declarations}
	
	\subsection*{Ethics Approval and Consent to Participate}
	
	The authors declare that this subsection is not applied for this work.
	
	\subsection*{Consent for Publication}
	
	The authors declare that they wrote completely all scripts associated
	with the results presented in this work (i.e., figures and tables). No script
	or data has been imported or used from subsection is not applied for
	this work.
	
	\subsection*{Availability of Data and Material}
	
	All scripts related to this work, developed by the two authors, can be found in github.com/FaissalElBouanani/MarcumQfunctionKappaMu/.
	
	\subsection*{Competing interests}
	
	The authors declare that they have no competing interests.
	
	\subsection*{Funding}
	
	The authors received no specific funding for this work.
	
	\subsection*{Authors\textquoteright{} contributions}
	
	YM derived new approximate expressions for (i) Marcum $Q$-function
	of the first order and (ii) SEP integral-form for $M$-PSK modulation.
	Based on these two results, YM and FE derived the approximate expressions
	for ASEP under both modulation schemes along with their asymptotic
	form and the achievable diversity order. YM performed the simulations.
	YM and FE wrote the paper, analyze and revise the results. All authors
	read and approved the final manuscript.
	
	\subsection*{Acknowledgment}
	
	The paper has been developed and written exclusively by the two authors.
	
	\section*{Figure Legends}
	
	\begin{itemize}
		\item {Fig. 1 Comparison between $\chi\left(\gamma\right)$ and $\widetilde{\chi}\left(\gamma\right)$. To show the accuracy of the fitting method, Fig. 1 depicts the curves of both exact and approximated fitting coefficients.}
		\item {Fig. 2 Comparison of the relative errors. To demonstrate the tightness of the proposed EP's approximate expressions for GC-DQPSK modulation, Fig. 2 depicts the absolute relative error of the proposed approximation with solid line as well as the best ones proposed in the literature.}
		\item {Fig. 3 ASEP for $M$-PSK under weak LOS scenario ($\kappa=1)$ with different values of $\mu$ and $m=1.3$.  The approximated ASEP is presented with solid line, the simulated one with marker and dashed line presents the asymptotic ASEP.}
		\item {Fig. 4 ASEP for M-PSK under strong LOS scenario ($\kappa=10)$ with different values of $m$ and $\mu=2$. The approximated ASEP is presented with solid line, the simulated one with marker and dashed line presents the asymptotic ASEP.}
		\item {Fig. 5 ASEP for M-PSK under non-LOS scenario ($\kappa=0)$ with $\mu=m$. The approximated ASEP is presented with solid line, the simulated one with marker and dashed line presents the asymptotic ASEP.}
		
		\item {Fig. 6 ASEP for M-PSK over various practical fading models with ($\kappa=5)$. The approximated ASEP is presented with solid line, the simulated one with marker and dashed line presents the asymptotic ASEP.}
		
		\item{ Fig. 7 ABEP for DQPSK under weak LOS scenario ($\kappa=1)$ with different values of $\mu$ and $m = 1.3$. The approximated ABEP is presented with solid line, the simulated one with marker and dashed line presents the asymptotic ABEP.}
		
		\item {Fig. 8 ABEP for DQPSK under strong LOS scenario ($\kappa=10)$ with different values of $m$ and $\mu= 2.3$. The approximated ABEP is presented with solid line, the simulated one with marker and dashed line presents the asymptotic ABEP.}
		\item {Fig. 9 ABEP for DQPSK over numerous practical fading distributions with ($\kappa=5)$. The approximated ABEP is presented with solid line, the simulated one with marker and dashed line presents the asymptotic ABEP.}
		\item{Fig. 10 Upper bound for the truncated error for $\mu= 2.3$ and $m = 4.7$ and various values of ($\kappa)$ and $L$.}
		\item{Fig. 11 Diversity order for $\kappa= 5$, $m = 4.7$ and various values of $\mu$ under M-PSK and DQPSK modulation schemes. The diversity curve of $M$-PSK modulation is presented by dashed line while the solid line presents that for GC-DQPSK modulation.}
	\end{itemize}


\begin{thebibliography}{10}
		\bibitem{gqfunc} M. K.Simon and M.-S. Alouini, \textit{Digital communications
			over fading channels,} 2nd edition, John Wiley \& Sons, 2005.
		
		\bibitem{m1} M. D. Yacoub, ``The $\alpha-\mu$ distribution: A physical
		fading model for the Stacy distribution,''\ \textit{IEEE Trans.
			Veh. Technol.}, vol. 56, no. 1, pp. 27--34, Jan. 2007.
		
		\bibitem{m2} M. D. Yacoub, ``The $\eta-\mu$ distribution and the
		$\kappa-\mu$ distribution,''\ \textit{IEEE Antennas Propag. Mag.},
		vol. 49, no. 1, pp. 68--81, Feb. 2007.
		
		\bibitem{m3} G. Fraidenraich and M. D. Yacoub, ``The $\alpha-\eta-\mu$
		and $\alpha-\kappa-\mu$ fading distributions,''\ 2006, pp. 16--20,
		Manaus, Amazon, Brazil.
		
		\bibitem{m4} R. Cogliatti and R. A. A. de Souza, ``A near-100\%
		efficient algorithm for generating $\alpha-\kappa-\mu$ and $\alpha-\eta-\mu$
		variates,''\ 2013, pp. 1--5, Las Vegas, NV, USA.
		
		\bibitem{kappa} J. F. Paris, ``Statistical characterization of $\kappa-\mu$
		shadowed fading,''\textit{ IEEE Trans. Veh. Technol.}, vol. 63, no.
		2, pp. 518--526, Feb. 2014.
		
		\bibitem{Ap1} S. L. Cotton, ``Human body shadowing in cellular device
		to device communications: Channel modeling using the shadowed $\kappa-\mu$
		fading model,''\ \textit{IEEE J. Sel. Areas Commun.,} vol. 33, no.
		1, pp. 111--119, Jan. 2015.
		
		\bibitem{ap2} ``Shadowed fading in body-to-body communications channels
		in an outdoor environment at 2.45 GHz,''\ \textit{in Proc. IEEE-APS
			Topical Conf. Antennas Propag. Wireless Commun.} (APWC), Palm Beach,
		FL, USA, Aug. 2014, pp. 249--252.
		
		\bibitem{ap3} F. J. Canete, J. Lopez-Fernandez, C. Garcia-Corrales,
		A. Sanchez, E. Robles, F. J. Rodrigo, and J. F. Paris, ``Measurement
		and modeling of narrowband channels for ultrasonic underwater communications,''\ \textit{Sensors},
		vol. 16, no. 2, Feb. 2016.
		
		\bibitem{ap4} Y. J. Chun, S. L. Cotton, H. S. Dhillon, F. J. Lopez-Martinez,
		J. F. Paris and S. K. Yoo, ``A comprehensive analysis of 5G heterogeneous
		cellular systems operating over $\kappa-\mu$ shadowed fading channels,''\ \textit{IEEE
			Trans. Wireless Commun.}, vol. 16, no. 11, pp. 6995-7010, Nov. 2017.
		
		\bibitem{ap5} J. Zhang, X. Li, I. S. Ansari, Y. Liu, and K. A. Qaraqe,
		``Performance analysis of dual-hop satellite relaying over $\kappa-\mu$
		shadowed fading channels,''\ \textit{in Proc. IEEE Wireless Commun.
			Networking Conf}. (WCNC), San Francisco, CA, USA, Mar. 2017, doi: 10.1109/WCNC.2017.7925541.
		
		\bibitem{kapamu} L. Moreno-Pozas, F. J. Lopez-Martinez, J. F. Paris
		and E. Martos- Naya, "The $\kappa-\mu$ shadowed fading model: Unifying
		the $\kappa-\mu$ and $\eta-\mu$ distributions,''\ \textit{IEEE
			Trans. Veh. Technol.}, vol. 65, no. 12, pp. 9630-9641, Dec 2016.
		
		\bibitem{kapa2} F. J. Lopez-Martinez, J. F. Paris and J. M. Romero-Jerez, "The $\kappa $-$\mu $ shadowed fading model with integer fading parameters." \textit{IEEE Trans. Veh. Technol.} vol. 66, no. 9, pp. 7653-7662, Sep. 2017.
		
		\bibitem{proakis} J.G. Proakis, \textit{Digital Communications},
		4th edition, New-York: McGraw-Hill, 2001.
		
		\bibitem{table integral} A. Jeffrey and D. Zwillinger, \textit{Table of integrals, series, and products}. Elsevier, 2007.
		
		\bibitem{chie} C. Chie, \textquotedblleft Bounds and approximations for
		rapid evaluation of coherent MPSK error probabilities,\textquotedblright\
		\textit{IEEE Trans. Commun.}, vol. 33, no. 3, pp. 271--273, Mar. 1985.
		
		\bibitem{komo} J. J. Komo and K. D. Barnett, \textquotedblright\ Improved
		bounds for coherent M-ary PSK symbol error probability,\textquotedblright\
		\textit{IEEE Trans. Veh. Technol}., vol. 46, no. 2, pp. 396--399, May. 1997.
		
		\bibitem{Yoon} S. Park, D. Yoon and K. Cho, \textquotedblright Tight
		approximation for coherent MPSK symbol error probability,\textquotedblright\
		\textit{\ Electron. Lett.}, vol. 39, no. 16, pp. 1220--1222, 7 Aug. 2003.
		
		\bibitem{luca} L. Rugini \textquotedblright SEP Bounds for MPSK with Low
		SNR,\textquotedblright\ in \textit{IEEE Commun. Lett.}, doi: 10.1109/LCOMM.2020.3012837..
		
		\bibitem{IeeeACCESS}F. El Bouanani, Y. Mouchtak and G. K. Karagiannidis, "New Tight Bounds for the Gaussian Q-Function and Applications," in \textit{IEEE Access}, doi: 10.1109/ACCESS.2020.3015344.
		\bibitem{yassine} Y. Mouchtak, F. El Bouanani, and D. B. da Costa, "Tight analytical and asymptotic upper bound for the BER and FER of linear codes over exponentially correlated generalized-fading channels,''\ \textit{IEEE Trans. Commun., \color{black}} vol. 67, no.\ 6, pp. 3852-3864, \color{black}June 2019.
		
		\bibitem{12} P. Y. Kam and R. Li., ''Computing and bounding the first-order Marcum Q-fuction: A geometric approach,''\ \textit{IEEE Trans. Comms.}, vol. 56, no. 7, pp. 1101-1110, July 2008.
		
		
		\bibitem{13} P. Y. Kam and R. Li, ``Simple tight exponential bounds
		on the first-order Marcum $Q$-function via the geometric approach,''\ \textit{2006 IEEE
			Int. Sym. Trans. Info. Theory}, Seattle, WA, 2006, pp. 1085-1089.
		
		\bibitem{14} J. Wang and D. Wu, ``Tight bounds for the Marcum $Q$-function,''\ \textit{Wirel.
			Commun. Mob. Comput.}, 2010.
		
		\bibitem{15} G. D. Zhao, X and Y. Li, ``Tight geometric bound for
		marcum qfunction,''\ \textit{Electronics Letters,} vol. 44, no.
		5, pp. 340-341, Feb 2008.
		
		\bibitem{ferrari} G. Ferrari and G.E. Corazza, ``Tight bounds and
		accurate approximations for DQPSK transmission bit error rate,''\textit{\ Electronics
			Letters}, vol. 40, no. 20, pp. 1284-1285, Sep. 2004.
		
		\bibitem{sun} Y. Sun, A. Baricz, M. Zhao, X. Xu, and S. Zhou, ``Approximate
		average bit error probability for DQPSK over fading channels,''\textit{\ Electronics
			Letters}, vol. 45, no. 23, pp. 1177-1179, Nov. 2009.
		
		\bibitem{barzic} A. Baricz, A. Szilárd, and J. Fodor, "New approximations
		for DQPSK transmission bit error rate,''\textit{\ IEEE 8th International
			Symposium on Applied Computational Intelligence and Informatics (SACI)},
		May. 2013, pp. 73-77.
		
		\bibitem{abouie} J. Abouei, K. N. Plataniotis, and S. Pasupathy,
		"Green modulations in energy-constrained wireless sensor networks,''\ \textit{IET
			Communications}, vol. 5, no. 2, pp. 240-251, 2011.
		
		\bibitem{24} B. Natarajan, C. R. Nassar, and S. Shattil, "CI/FSK:
		bandwidth-efficient multicarrier FSK for high performance, high throughput,
		and enhanced applicability,''\ I\textit{EEE Trans. Commun.}, vol. 52, no.
		3, pp. 362-367, Mar. 2004.
		
		\bibitem{25} F. F. Digham, M.-S. Alouini, and S. Arora, "Variable-rate
		variable-power non-coherent M-FSK scheme for power limited systems,''\ \textit{IEEE
			Trans. Wirel. Commun.}, vol. 5, no. 6, pp. 1306-1312, June 2006.
		
		\bibitem{sadhwani} Sadhwani, Dharmendra, Ram Narayan Yadav, and Supriya
		Aggarwal, "Tighter Bounds on the Gaussian $Q$-Function and Its
		Application in Nakagami-$m$ Fading Channel,''\ \textit{IEEE Wirel.
			Communic. Lett.}, vol. 6, no. 5, pp. 574-577, June 2017.
		
		\bibitem{monotonbessel} J. A. Cochran, "The monotonicity of modified
		Bessel functions with respect to their order'', \textit{Journal
			of Mathematics and Physics}, vol. 46, no 1-4, pp. 220-222, 1967.
		
		\bibitem{craigs} J. W. Craig, "A New, Simple and Exact Result for
		Calculating the Probability of Error for Two-Dimensional Signal Constellations,''\textit{\ in
			Proc. of the IEEE Military Commun. Conf., McLean, USA}, Nov. 1991, pp. 571-575.
		
		
		\bibitem{wolframe} I. W. Research, Mathematica Edition: version 12.1
		Champaign, Illinois: Wolfram Research, Inc., 2020.
		
		\bibitem{alphkmushadowed} P. Ramirez-Espinosa, et al. "The
		$\alpha-\kappa-\mu$ Shadowed Fading Distribution: Statistical Characterization
		and Applications,''\ in\textit{ 2019 IEEE Global Communications Conference (GLOBECOM)}, Waikoloa, HI, USA, 2019, doi: 10.1109/GLOBECOM38437.2019.9013399.
		
		\bibitem{FB} P. Ramirez-Espinosa, F. J. Lopez-Martinez, J. F. Paris,
		M. D. Yacoub, and E. Martos-Naya, "An extension of the $\kappa-\mu$
		shadowed fading model: Statistical characterization and applications,''\ \textit{IEEE
			Trans. Veh. Technol.}, vol. 67, no. 5, pp. 3826-3837, May 2018.
	\end{thebibliography}
\end{document}